\newcommand{\Z}{{\mathbb{Z}}}
\newcommand{\R}{{\mathbb{R}}}
\newcommand{\Q}{{\mathbb{Q}}}
\newcommand{\N}{{\mathbb{N}}}
\DeclareMathOperator*{\argmin}{arg\,min}
\DeclareMathOperator*{\argmax}{arg\,max}
\DeclareMathOperator{\acc}{Acc}
\DeclareMathOperator{\vol}{Vol}
\renewcommand{\epsilon}{\varepsilon}
\renewcommand{\phi}{\varphi}
\theoremstyle{plain}% default
\newtheorem{thm}{Theorem}[section]
\newtheorem{lem}[thm]{Lemma}
\newtheorem{prop}[thm]{Proposition}
\theoremstyle{definition}
\newtheorem{defi}{Definition}
\newtheorem{remark}{Remark}
\title{Structure function and fractal dissipation for an intermittent inviscid dyadic model}
\author{Luigi Amedeo Bianchi}
	\affil{Institut f\"ur Mathematik\\ Technische Universit\"at Berlin\\ \href{mailto:bianchi@math.tu-berlin.de}{bianchi@math.tu-berlin.de}}
\author{Francesco Morandin}
\affil{Dipartimento di Scienze Matematiche, Fisiche e Informatiche\\ Universit\`{a} degli Studi di Parma\\ \href{mailto:francesco.morandin@unipr.it}{francesco.morandin@unipr.it}}
\date{}
\begin{document}

\maketitle
\begin{abstract}
  We study a generalization of the original tree-indexed dyadic model
  by Katz and Pavlovi\'c for the turbulent energy cascade of
  three-dimensional Euler equation. We allow the coefficients to vary
  with some restrictions, thus giving the model a realistic spatial
  intermittency. By introducing a forcing term on the first component,
  the fixed point of the dynamics is well defined and some explicit
  computations allow to prove the rich multifractal structure of the
  solution. In particular the exponent of the structure
  function is concave in accordance with other theoretical and
  experimental models. Moreover anomalous energy dissipation happens
  in a fractal set of dimension strictly less than 3.
\end{abstract}

\section{Introduction}

Three dimensional turbulent fluids are far from being fully
understood, from a mathematical point of view. Even if we know the
equations governing the behaviour of the fluid, extracting the laws of
turbulence is extremely difficult. It is not surprising, then, that
many simplified models have been developed in the past years to
capture at least some aspects of turbulent fluids.  Among those the
shell models are of particular interest. Introduced by Novikov, they
have many variants. We recall here the dyadic model~\cite{DesNov1974}
and the GOY, as introduced by Gledzer~\cite{gledzer1973system} and
Ohkitani and Yamada~\cite{ohkitani1989temporal}.

The study of shell models in turbulence is well established in the
physics literature, in particular for the relative ease of numerical
simulation. A nice review of this is Biferale~\cite{Bif}.

The model we are interested in belongs to the family of dyadic shell
models, and was introduced by Katz and
Pavlovi\'c~\cite{MR2095627}. Its main feature is the tree structure of
the components, which allows to write a simplified wavelet description
of the Euler equations. (Conversely, the more common integer-indexed
shell models are constructed to be reminiscent of Littlewood-Paley
decomposition.)

Even if the motivations for these models are quite different, it is
also natural to see the tree models as generalizations of the usual shell
models. This has been done for example by Benzi and Biferale for the
GOY model in~\cite{benzi19971+} and in~\cite{barbiaflamor} for many
results that were proved about the dyadic in~\cite{BarFlaMor11}.

\paragraph{Anomalous dissipation.}

One of the main features of most inviscid shell models is the blow-up
of regularity, linked to the ``anomalous'' dissipation of energy. With
the latter we intend that the non-linear, formally conservative term,
``fires'' lumps of energy to smaller and smaller scales, making them
actually disappear. In passing from Euler to Navier-Stokes, the
introduction of a term corresponding to the viscosity of the fluid may
sometimes be enough to brake this phenomenon (as was proved for the
dyadic model with viscosity in~\cite{BarMorRom}), but the non-linear
term can be tailored to overcome thermal dissipation, in fact Tao
in~\cite{tao2015finite} used a shell model to prove that some averaged
versions of three-dimensional Navier-Stokes equation have blow-up.

Anomalous dissipation is connected to Onsager's conjecture on the
regularity of the solutions of Euler equation, discussed later on.

\paragraph{RCM, tree dyadic with repeated coefficients.}

In this paper, we build on the previous work in~\cite{barbiaflamor},
and consider a more general model, that still exhibits anomalous
dissipation of energy. The model will be introduced in the following
section. The main difference from the literature is that we allow the
coefficients of the non-linear term to depend on the nodes of the tree
not only through their generation. Every node $j$ of the tree has
$N=2^d=8$ children $j_1, j_2,\dots,j_N$ and interacts with each one of
them in the same way but for a coefficient
$c_{j_i}=2^{\alpha|j|}\delta_i$, where $\{\delta_1,\dots,\delta_N\}$
are fixed positive numbers that are repeated for all nodes $j$ and
$|j|$ is the generation of $j$. We call this the model with repeated
coefficients or RCM.

In the previous models from the literature the choice was
$\delta_i\equiv1$, and in many cases the solutions were uniform in
phase space and quite regular in physical space. Allowing for
different $\delta_i$'s forces spatial intermittency on the solutions,
thus yielding interesting results in terms of structure function and
singularities spectrum. From a physical point of view, we see this
generalization as a picture of the ``istantaneous'' Euler dynamics, as
explained in Remark~\ref{r:variable_coeff_instant_euler}.

\paragraph{Structure function.}

Structure functions are among the main objects studied in physics to
give a statistical description of the energy cascade in
turbulence. These are denoted by $S_p(r)$ and defined as the
$p$-moments of the velocity increments on the scale $r$. In his
cornerstone work on the theory of turbulence K41~\cite{K41},
Kolmogorov postulated that $S_p(r)\sim r^{p/3}$, but subsequent
numerical and experimental studies (for
example~\cite{anselmet1984high,belin1996exponents,benzi1996generalized,PhysRevE.59.5457})
did not fully agree with such prediction, showing instead a scaling
exponent nonlinearly dependent on $p$: $S_p(r)\sim r^{\zeta_p}$. This
discrepancy is evidence of a multifractal spectrum of the
singularities and is usually attributed to some spatial intermittency
of the energy cascade
(see~\cite{benzi1984multifractal,parisi1985singularity,meneveau1991multifractal,riedi1999multifractal}
and many references therein).

To cope with this discrepancy, in the past years there have been
several attempts to develop phenomenological models for the energy
cascade which are intermittent and self-similar. To cite just a few,
there is the log-normal model by Kolmogorov and
Obukhov~\cite{K62,obukhov1962some}, the random curdling by
Mandelbrot~\cite{mandelbrot1974intermittent}, the $\beta$-model by
Frisch, Sulem and Nelkin~\cite{frisch1978simple}, the random
$\beta$-model by Benzi, Paladin, Parisi and Vulpiani~\cite{benzi1984multifractal}, to the
more recent $\alpha$- and $p$-models
(see~\cite{meneveau1991multifractal} for an excellent review) and
finally, the log-Poisson model by She and L\'ev\^eque~\cite{she1994universal}.

Many of these models actually exhibit a concave $\zeta_p$, thus
yielding rich multifractal structure, but none are obtained as
solutions of a dynamical model. On the other hand, Benzi, Biferale and Parisi~in~\cite{benbifpar93} deduce a plausible $\zeta_p$ for
the stationary distribution of the GOY shell model, but their
derivation is not rigorous.

One of the main results of this paper is that the constant solution
of RCM is a self-similar, multifractal function that truly exhibits a
non-linear scaling exponent of singularities, with a concave graph not
dissimilar from those coming from numerical experiments.  (On the
contrary, the dyadic shell model and the tree dyadic model both agree
with Kolmogorov theory and have $\zeta_p=p/3$.)

As already stated, this is linked to spatial intermittency of the
energy cascade. Truly, one can introduce the measure associated with
turbulent energy cascade of the solution, and prove that this measure
is itself a self-similar, multifractal, multiplicative cascade.

\paragraph{Onsager conjecture.}

In~\cite{onsager1949statistical}, while trying to
understand the phenomenon of energy dissipation in three-dimensional
turbulent fluids for vanishing viscosity, Lars Onsager stated the
conjecture that bears his name: that solutions of the incompressible
Euler equations are energy preserving if they have a H\"older
regularity greater than $1/3$ and that for every H\"older exponent
$\alpha\leq1/3$ there exists a weak solution of the Euler equation in
$C^\alpha$ that dissipates energy.

The first half of this conjecture has been proven by Constantin, E and
Titi~\cite{constantin1994onsager} for three-dimensional Euler
equations, in the setting of Besov spaces, building on a previous work
by Eyink~\cite{eyink1994energy}. The second half lead to the
development of many partial results, in particular by Buckmaster, De
Lellis, Isett and
Székelyhidi~\cite{buckmaster2015dissipative,buckmaster2015onsager,Isett2017},
but it is still open.

It is worth noting that the unique constant solution of our model
always exhibits anomalous dissipation and it has H\"older regularity
$h\leq1/3$. In particular $h<1/3$ if the coefficients $\delta_i$ are
not all equal. (See Theorem~\ref{t:Cs_and_zetap_RCM}.) This is in
accordance with Onsager's conjecture. Moreover, if we introduce the
local H\"older exponent $s(x)$ for each point $x$, then $s(x)\geq h$
and it is possible to compute its multifractal spectrum and to show
that anomalous dissipation occurs at all points $x$ for which
$s(x)\leq 1/3$.

\paragraph{Constant solutions.}

One serious drawback of RCM is that it is mathematically hard to deal
with. In fact we cannot prove significant results for the general
solution of the problem. Instead we introduce a constant forcing term
on the first component and look for \emph{constant} solutions.

The fact that finite-energy, constant solutions exist, is \emph{per
se} an interesting proof of anomalous dissipation, but ---what is more
important--- the constant solution can be made completely explicit, and
its structure analysed in every detail.

One might wonder if considering only constant solutions is too
restrictive, but we stress that they are an interesting first step
that motivates further study of models on trees with variable
coefficients. Moreover it is reasonable to conjecture that the
constant solution is an attractor (as is the case for the dyadic shell
model, see Cheskidov et al.~\cite{CheFriPav2010}), making its
properties even more interesting. The next natural step would be to
study solutions that are not constant in time but statistically
stationary, in some sense. We believe that many properties of constant
solutions are universal and hence would hold also for stationary
solutions.

\paragraph{Main results.}

For the sake of clarity, we state here the main results of the paper
in the physically meaningful case, that is $d=3$ and $\alpha=5/2$.
The complete statements and the proofs can be found in
Section~\ref{s:general_existence},~\ref{s:structure_function}
and~\ref{s:fractality}.

\begin{thm}%\label{thm:exuniq}
There exists a unique constant finite-energy solution for the RCM. The
exponents of the structure function corresponding to this solution are
given by
\[
\zeta_p = \frac{p}{3} +\frac{p}{2}(\ell_{3/2}-\ell_{p/2})
,\qquad p\geq0
\]
where $\ell_{s}$ is a function of $s$ that depends on the repeated
coefficients $\delta_i$'s: it is constant if they are all equal, while
otherwise it is strictly increasing with finite limits at
$\pm\infty$. In the latter case, the function $\zeta_p$ is strictly
concave and has an oblique asymptote. Moreover if the ratio between
the maximum of $\delta_i^{3/2}$, $i=1,2,\dots,N$ and their average is
less then 2, then $\zeta_p$ is increasing for all $p$.
\end{thm}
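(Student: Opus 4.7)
The plan is to split the argument along the four assertions of the theorem. To establish existence and uniqueness of the constant solution, I would set $\dot X_j\equiv 0$ in the RCM and isolate the flux at each node $j$, obtaining an algebraic balance which, starting from the root (where the forcing fixes the total flux), can be solved recursively down the tree. By induction on $|j|$, the constant solution takes the multiplicative form
\[
X_j \;=\; c_0\cdot 2^{-\alpha|j|/3}\cdot\prod_{k=1}^{|j|}h(\delta_{i_k}),
\]
where $i_1,\dots,i_{|j|}$ are the child-indices along the path from the root to $j$ and $h$ is an explicit function determined by the one-step balance. Finite energy follows because $\sum_{|j|=n}X_j^2$ factorises over siblings into $\bigl(\sum_i h(\delta_i)^2\bigr)^n$ times an exponentially small factor in $n$, and summability reduces to checking that the resulting geometric ratio is $<1$ for the physically relevant $\alpha=5/2$, $d=3$.

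For the explicit $\zeta_p$, I would introduce
\[
\ell_s \;:=\; \frac{1}{s\log N}\log\Bigl(\frac{1}{N}\sum_{i=1}^{N}\delta_i^{s}\Bigr),
\]
which is (up to a constant factor) the Rényi-type functional associated with the uniform distribution on $\{\delta_i\}$. Velocity increments at scale $2^{-n}$ translate, as in~\cite{barbiaflamor}, into expressions in the $X_j$'s at level $n$; summing the $p$-th power over generation $n$ and using the product form of $X_j$ produces a factor $\bigl(N^{-1}\sum_i h(\delta_i)^p\bigr)^n$, whose exponent, after substituting the explicit dependence of $h$ on $\delta_i$, equals precisely $n s\ell_s$ with $s=p/2$. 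The term $p/3$ in $\zeta_p$ comes from the scale factor $2^{-\alpha|j|/3}$, while the intermittency correction $\tfrac{p}{2}(\ell_{3/2}-\ell_{p/2})$ is produced by the cascade, with the $\ell_{3/2}$ piece reflecting the normalisation at the root (since total dissipation is cubic in $X_\emptyset$, the exponent $s=3/2$ is the natural normalisation).

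The structural properties of $\ell_s$ and $\zeta_p$ are then essentially convex analysis. The quantity $s\,\ell_s\log N$ is the cumulant generating function of $\log\delta$ for $\delta$ uniform on $\{\delta_i\}$, hence convex in $s$ and strictly so unless all $\delta_i$ coincide; this gives the strict monotonicity of $\ell_s$ via the well-known fact that $L^s$-norms of a non-constant positive random variable are strictly increasing in $s$, while the limits $\ell_{+\infty}=\log_N\max_i\delta_i$ and $\ell_{-\infty}=\log_N\min_i\delta_i$ follow from a Laplace-type argument. Rewriting $\zeta_p = p/3 + (p/2)\ell_{3/2} - (p/2)\ell_{p/2}$, strict concavity of $\zeta_p$ reduces to strict convexity of $p\mapsto (p/2)\ell_{p/2}$, which holds because the latter is itself a cumulant generating function; the oblique asymptote is an immediate consequence of $\ell_{p/2}\to\log_N\max_i\delta_i$ as $p\to\infty$.

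For the sufficient condition $\zeta_p'\ge 0$ for all $p\ge 0$, I would differentiate explicitly: using the expression of $\ell_s'$ as a log-derivative of a moment, the inequality becomes a comparison of moments of $\delta_i^{1/2}$ whose worst case is $p\to\infty$; in that limit the relevant ratio converges to $\max_i\delta_i^{3/2}\big/\bigl(N^{-1}\sum_i\delta_i^{3/2}\bigr)$, and requiring this ratio to be $<2$ is exactly the stated hypothesis. The main technical obstacle lies in Step~1: one must verify that the recursive balance yields positive $X_j$ and that the resulting series $\sum X_j^2$ converges uniformly in the choice of the $\delta_i$'s; once the multiplicative structure of $X_j$ is secured, the remaining steps are essentially careful book-keeping within the multifractal formalism, together with standard convex-analytic arguments.
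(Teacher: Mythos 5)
Your route for the explicit formula, the convexity/concavity analysis via the cumulant generating function $s\ell_s$, the asymptote, and the reduction of the monotonicity condition to positivity of the asymptotic slope all match the paper's argument in substance. But there is a genuine gap in Step~1, precisely where you flag "the main technical obstacle": the forward recursion down the tree does \emph{not} determine the solution. The stationary balance at a node $j$ is a single scalar equation in the $N=2^d$ unknowns $u_k$, $k\in\mathcal O_j$ (given $u_{\bar\jmath}$ and $u_j$), so at every node there are $N-1$ degrees of freedom; "solving recursively down the tree" therefore produces an enormous family of componentwise solutions, of which the multiplicative/self-similar one is only a particular choice. Your induction establishes \emph{existence} of the self-similar solution (this is how the paper gets it too, as a fixed point of the recursion in the variables $q_j=\log_2(u_j/(u_{\bar\jmath}\sqrt{d_j}))$), but it says nothing about \emph{uniqueness} among finite-energy solutions, which is a substantive claim of the theorem. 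The paper needs a separate argument (Theorem~\ref{t:atmostonehr}): if two solutions first differ at some node by $\epsilon_0>0$ in the $q$-variables, the stationarity constraint $\sum_{k\in\mathcal O_j}d_k^{3/2}2^{q_k+2q_j}=2^{-\alpha}$ forces the discrepancy to at least double in absolute value along a suitably chosen branch, so the partial sums $\sum_i q_{j_i}$ grow like $2^n$ and the components grow doubly exponentially, expelling the second solution from every $H^s$. Without some such mechanism your proof cannot rule out the other branches of the recursion.

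A secondary, quantitative issue: you define $\ell_s$ with $\log/(s\log N)$, i.e.\ a base-$N$ logarithm, whereas the formula $\zeta_p=\tfrac p3+\tfrac p2(\ell_{3/2}-\ell_{p/2})$ and the "ratio less than $2$" criterion require the base-$2$ normalization $\ell_s=\tfrac1s\log_2\bigl(\tfrac1N\sum_i\delta_i^s\bigr)$. With your normalization the intermittency correction comes out a factor $d=3$ too small, and the threshold in the last assertion becomes $2^d=8$ rather than $2$ (the condition is exactly $h=\tfrac13\bigl(1-\log_2\frac{\max_i\delta_i^{3/2}}{N^{-1}\sum_i\delta_i^{3/2}}\bigr)>0$). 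This is fixable by book-keeping, but as written the stated identities would not hold.
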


This depicts a model with spatial intermittency, as the scaling
exponents, for $p>3$, lie below the Kolmogorov's $p/3$ line.

To study the geometry of anomalous dissipation we associate each index
$j\in J$ to one cube $Q_j$ of side $2^{-|j|}$ in the dyadic lattice
$\bigcup_n(2^{-n}[0,1])^3$ and identify a non-negative term $F_j$
measuring the energy dissipated inside the cube $Q_j$, with the
property that
\[
\sum_{|j|=n}F_j=1.
\]
\begin{thm}
Suppose the repeated coefficients $\delta_i$'s are not all
equal. Then there is a set $\mathcal H\subset[0,1]^3$ of Hausdorff
dimension strictly less than 3 such that
\[
\sum_{\substack{|j|=n\\Q_j\cap\mathcal H\neq\emptyset}}F_j=1
,\qquad
\sum_{\substack{|j|=n\\Q_j\cap\mathcal H=\emptyset}}F_j=0
,\qquad n\geq1
\]
and
\[
\lim_{n\to\infty}\vol\biggl(\bigcup_{\substack{|j|=n\\Q_j\cap\mathcal H\neq\emptyset}}Q_j\biggr)=0.
\]
\end{thm}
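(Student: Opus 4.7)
The plan is to identify $F_j$ as a Mandelbrot multiplicative cascade along the $N=8$-ary tree and then run a standard large-deviations / multifractal argument. From the explicit form of the unique constant solution (as given by Theorem~\ref{t:Cs_and_zetap_RCM}), one expects that for a node $j=(i_1,\ldots,i_n)$ of generation $n$,
\[
F_j \;=\; \prod_{k=1}^{n} w_{i_k}, \qquad \sum_{i=1}^{N} w_i = 1,
\]
where the probability vector $(w_1,\ldots,w_N)$ is determined algebraically by $(\delta_1,\ldots,\delta_N)$ through the nodal energy balance (one expects $w_i \propto \delta_i^{3}$, coming from $F_j\propto c_j a_{j^-} a_j^{2}$ evaluated on the constant solution $a_j\propto 2^{-\gamma|j|}\prod \delta_{i_k}$). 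Since the $\delta_i$'s are not all equal, neither are the $w_i$'s, and strict Jensen therefore gives Shannon entropy
\[
H \;:=\; -\sum_{i=1}^{N} w_i \log w_i \;<\; \log N \;=\; 3\log 2 .
\]

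View $F$ as the cascade probability measure $\mu$ on $[0,1]^3$ via $\mu(Q_j) = F_j$. The Shannon--McMillan--Breiman theorem (equivalently, the law of large numbers for i.i.d.\ symbols of law $w$) shows that $\mu$-almost every $x$ lies in cubes of mass $\exp(-n(H+o(1)))$. I would then take $\mathcal H$ to be a nested intersection across scales of the unions of the heaviest cubes capturing all but exponentially small residual mass, so that $\mu(\mathcal H)=1$ while the total mass outside vanishes at each finite scale. A covering count gives $\dim_H \mathcal H \leq H/\log 2 < 3$, and
\[
\vol\!\Bigl(\bigcup_{\substack{|j|=n\\Q_j\cap\mathcal H\neq\emptyset}}Q_j\Bigr) \;\leq\; \exp\!\bigl(n(H - 3\log 2) + o(n)\bigr) \;\longrightarrow\; 0,
\]
since only $\exp(n(H+o(1)))$ of the $2^{3n}$ level-$n$ cubes touch $\mathcal H$ while each has volume $2^{-3n}$. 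The two summation identities then follow because the mass of $\mu$ is fully concentrated on (cubes meeting) $\mathcal H$.

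The main obstacle is the rigid equality $\sum F_j = 0$ over cubes disjoint from $\mathcal H$: when all $w_i$ are strictly positive every $F_j$ is strictly positive, so $\mathcal H$ cannot simply be the topological support of $\mu$. The thinning of $\mathcal H$ must therefore be synchronised with the dyadic-cube adjacency so that cube-boundary effects contribute only an $o(n)$ correction in the exponent, while the bulk gap $3\log 2 - H > 0$ is preserved, and simultaneously $\mathcal H$ must be large enough to meet every cube with positive mass. A workable construction is to take $\mathcal H$ as the closure of the $\mu$-typical set, intersected with the nested cube-unions of heavy cubes, and then to verify, using the finite-range multiplicative structure of the cascade, that no positive-mass cube is missed. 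Once this bookkeeping is in place, the remaining inequalities are the standard large-deviation estimates for the cascade measure $\mu$.
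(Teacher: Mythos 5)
Your core mechanism is sound and is essentially the probabilistic repackaging of what the paper does. On the constant solution one gets exactly $F_j=\prod_{k\le j}w_{(k)}$ with $w_i=\delta_i^{3/2}\big/\sum_v\delta_v^{3/2}$ (not $w_i\propto\delta_i^{3}$: from $F_j\propto c_ju_{\bar\jmath}^2u_j$ with $c_j\propto d_j$ and $u_j\propto\prod_{k\le j}\sqrt{d_k}$ the exponent is $1+1+\tfrac12=\tfrac32$; this slip is immaterial to the structure). Your entropy $H(w)$ in bits equals $d+\tfrac32\ell_{3/2}-\tfrac32\phi(3/2)$, which is precisely the paper's $\Delta=\mathcal D(\phi(3/2))$, and your i.i.d.\ measure with marginal $w$ is exactly the tilted distribution $\widehat p_\omega=\delta_\omega^{3/2}/\sum_v\delta_v^{3/2}$ that the paper obtains by Lagrange multipliers in Lemma~\ref{l:maxH_on_symplex}. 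Where the paper proves concentration of $\sum_jF_j\delta_{\sigma_j}$ by counting multinomial coefficients and applying Stirling (Theorem~\ref{t:AD_is_phi32}), and gets the dimension from Olsen's theorem, you get both from the law of large numbers / Shannon--McMillan--Breiman plus a covering count. That is a legitimate and arguably shorter route to the same rate function and the same dimension.

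The genuine gap is in the final construction of $\mathcal H$, and it is exactly at the ``main obstacle'' you flag but do not resolve. First, your concrete proposal fails: since every $w_i>0$, every dyadic cube has positive $\mu$-mass, so the typical set is dense and its \emph{closure} is all of $[0,1]^3$; taking closures destroys both the dimension bound and the volume statement. Second, and more fundamentally, no bookkeeping can rescue the literal statement: because every $F_j>0$, the identity $\sum_{|j|=n,\,Q_j\cap\mathcal H=\emptyset}F_j=0$ forces every level-$n$ cube to meet $\mathcal H$, which makes $\bigcup_{Q_j\cap\mathcal H\neq\emptyset}Q_j=[0,1]^3$ and contradicts the volume limit. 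The two displayed identities and the volume condition cannot all hold exactly for each $n$; they must be read asymptotically, $\lim_n\sum_{Q_j\cap\mathcal H\neq\emptyset}F_j=1$ and $\lim_n\sum_{Q_j\cap\mathcal H=\emptyset}F_j=0$, and this limit version is what the paper actually proves (Theorem~\ref{t:AD_is_phi32} together with the Hausdorff dimension computation via Lemma~\ref{l:maxH_on_symplex}, Lemma~\ref{l:calE_geq_calD} and Olsen's theorem). With that reading a working choice is, for a small closed neighbourhood $B$ of $\phi(3/2)$ and $n_0$ large, $\mathcal H=\{x:\sigma_{x_n}\in B\ \text{for all }n\ge n_0\}$: only cubes with $\sigma_j\in B$ can meet it, which gives both the volume decay and $\dim_H\mathcal H\le\sup_{a\in B}\mathcal D(a)<d$, while your large-deviation estimate shows the $F$-mass of those cubes tends to $1$. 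You should state this limit interpretation explicitly and replace the closure of the typical set by such a frequency-constrained set; as written, the last step of your argument does not close.
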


The structure of the paper is the following: in
Section~\ref{s:physical_model} we introduce our model and discuss its
physical meaning, with some additional details presented in
Appendix~\ref{app:physical}. In Section~\ref{s:general_existence} we
prove existence and uniqueness of the constant solution, then we move
on to determine the form of the exponent in the structure function and
discuss its properties in Section~\ref{s:structure_function}. Finally,
in Section~\ref{s:fractality}, we prove the multifractality results
for the anomalous dissipation of energy.

\section{The models}\label{s:physical_model}
% !TeX spellcheck = en_GB
This section is devoted to the presentation of the dyadic tree model
introduced by Katz and Pavlovi\'c in~\cite{MR2095627} and studied
again in Barbato et al.~\cite{barbiaflamor} and to its generalization,
which is the main model of this paper.

These models specify the dynamics in terms of some coefficients
$(v_j(t))_j$, indexed by a tree $J$. The equations have some likeness
to those one would get with any wavelet decomposition of Euler
equations:
\[
\frac d{dt}v_j(t)
=\sum_{k,l\in J}C_{j,k,l}v_k(t)v_l(t).
\]

In the previous works the model has been studied as an abstract
formulation, but in the present work we would like to investigate also
some geometric properties of the physical ``solution'', in the physical
space.

To this end, we prove rigorous statements for the abstract model, but
give also non-rigorous consequences for a physical ``solution'' which
we imagine to be recomposed from the coefficients $v_j(t)$'s through
any orthonormal family $(\psi_j(x))_j$ of wavelets on a cube
$Q_\emptyset$ of $\R^d$.
\begin{equation}\label{e:wavelet_recomposition_t}
v(t,x)
\coloneqq \sum_{j\in J}v_j(t)\psi_j(x)
,\qquad t\geq0,x\in Q_\emptyset.
\end{equation}

We do not explicitly choose the wavelets, but try to deduce universal
consequences, which would not depend on the choice. In particular,
for our purposes, the physical ``solution'' $v(t)$ will be a scalar%
\footnote{It may seem confusing that $v$ is scalar, but the results
for a vectorial field would essentially be the same. In fact the
dynamics is not deduced rigorously from Euler equations. Instead the
abstract model is introduced at the level of the coefficients $v_j(t)$
in such a way to ensure the cascade of energy. The reconstructed field
$v(t,x)$ is then studied only from the point of view of its
regularity. If we chose vectorial wavelets instead, all the results could be
easily restated for the vectorial case, with a more cumbersome
notation but without any significant change in the results.}
field whose regularity is what we propose to study.

Consider a solution which is stationary in some sense, $v(t,x)\approx
u(x)$ for all~$t$. The structure function of order $p$ of $u$ is
defined by
\[
S_p(r)\coloneqq \int_{Q_\emptyset}\left\langle|u(x)-u(y)|^p\right\rangle_ydx
\]
where $\langle\cdot\rangle_y$ denotes the average on the points $y$
such that $|y-x|=r$.

This is a very popular tool to study turbulence in fluid mechanics.
In particular one often considers the infinitesimal behaviour of
$S_p(r)$ as $r\to0$, introducing the exponents of the structure
function, that is
\begin{equation}\label{e:zeta_p_def}
\zeta_p\coloneqq -\lim_{n\to\infty}\frac1n\log_2S_p(2^{-n}).
\end{equation}

It is known that $\zeta_p$ is linked to the Besov norms
$B_p^{s,\infty}$, which in turn can be computed from the wavelet
coefficients of $u$ in an universal fashion, not depending on the
actual wavelet basis chosen.

In this section, after an introduction of the abstract model, we will
link it to a physical solution, define and compute some Besov norms of
the latter and finally deduce the formula of $\zeta_p$ in terms of the
solution to the abstract model.

\subsection{Abstract model}%\label{s:abstract_model}

Let $d$ be the space dimension and let $N=2^d$. Consider the following
set with its natural tree structure:
\[
	J\coloneqq \bigcup_{n=0}^\infty\{1,2,\dots,N\}^n
	=\{\emptyset,1,2,\dots,N,(1,1),(1,2),\dots\}.
\]
For all $j=(j^1,j^2,\dots,j^m),k=(k^1,k^2,\dots,k^n)\in J$, we define
the append operator $jk\coloneqq (j^1\dots,j^m,k^1,\dots,k^n)\in J$, the size
operator $|j|\coloneqq m\in\N$, the partial ordering $j\leq k$ if and only if $k=jh$ for
some $h\in J$ (with $j<k$ if moreover $|h|>0$), the father operator
$\bar\jmath\in J$ such that $\bar\jmath<j$ and $|\bar\jmath|=|j|-1$
and the offspring set of $j$, $\mathcal O_j\coloneqq \{k\in J:\bar k=j\}$.

Our model is given by the following equations
\begin{equation}\label{e:model}
v_j'(t)=c_jv_{\bar\jmath}^2(t)-\sum_{k\in\mathcal O_j}c_kv_j(t)v_k(t)
,\qquad j\in J,\quad t\geq0
\end{equation}
where $c_j=d_j2^{\alpha |j|}$, $\alpha>0$, $d_j>0$ for $j\in J$,
$d_\emptyset=1$ and $v_{\bar\emptyset}(t)\equiv f$.

It generalizes the model introduced by Katz and Pavlovi\'c
in~\cite{MR2095627}, where $f=0$ and $d_j=1$ for all $j\in J$.

The parameter $\alpha$ is left free in all statements, but from a
physical point of view, some heuristic arguments based on Euler
dynamics suggest to fix $\alpha=\frac d2+1$, which is what the other
authors also used. See for example works by Katz, Pavlovi\'c,
Friedlander and
Cheskidov~\cite{MR2095627,FriPav,CheFriPav2007,MR2522972}. Recently it
was proved rigorously in Barbato et al.~\cite{barbato2015global} that
$\alpha\leq\frac52$ for a Littlewood-Paley decomposition of the true
three-dimensional Euler dynamics.

The generalization to variable $d_j$'s is very important. As we will
see it completely changes the behaviour of anomalous dissipation and
makes the function $\zeta_p$ strictly concave (as it should be,
according to the most important numerical simulations of realistic
turbulence models).

\begin{remark}\label{r:variable_coeff_instant_euler}
We believe this generalization to be well justified from a physical
point of view. When passing from a detailed description of Euler
equations to any shell model of turbulence, many components (either
Fourier or wavelets) are merged inside any single component of the
shell model, thus the nonlinear interaction between adjacent shell
components cannot be known precisely, and actually it depends on how
the energy of the shell is distributed among the original components.
In~\cite{barbato2015global} for example a shell model is \emph{rigorously}
deduced from Euler equations and truly the coefficients
$\phi_{l,m,n}(t)$ of the nonlinear interaction turn out to be
complicated, to depend on time and on the solution itself and they
only allow to be studied by the bound
$|\phi_{l,m,n}(t)|\leq2^{(5/2)\min(l,m,n)}$. This means that at any
fixed time $t$ the true Euler dynamics, seen through a realistic
shell model, have ``instantaneous'' coefficients of interactions which
are all different and only statistically behave like $2^{(5/2)n}$.

In this sense looking for constant solutions of the models with
variable coefficients identifies a very large class of fields among
which we expect to find the solutions of the true Euler equations
which in some sense are stationary or stable with respect to time
evolution.
\end{remark}

It would be really important to have a complete generality of the
variable coefficients. In our model we always consider $|\log d_j|$
bounded and the more general results are proved in this setting.
Nevertheless explicit computation of many quantities is possible only
in the special case that the same fixed $N$ coefficients
$\delta_\omega$ appear in every set of the form $\{d_k:k\in\mathcal
O_j\}$. We call this the model with repeated coefficients or RCM (see
Definition~\ref{def:rep_coeff}) and our most interesting and
meaningful results are restricted to this model.

\subsection{Physical space}\label{s:physical_space}

Three-dimensional Navier-Stokes equations have been studied several times
by means of multiresolution analysis or wavelet decomposition
(see~\cite{deriaz2006divergence,stevenson2011divergence} and
references therein). The typical expression for the velocity field is
\[
v(x,t)
=\sum_{Q\in\mathcal Q}\sum_{a\in\mathcal A}v_Q^a(t)\psi_Q^a(x),
\]
where $\mathcal Q$ is the set of the dyadic cubes inside
$Q_\emptyset\coloneqq[0,1]^3$, $\psi_Q^a$ is a rescaling essentially supported
on $Q$ of the ``mother'' $\psi_{Q_\emptyset}^a$ of the wavelets, and
$\mathcal A$ is a fixed, finite set of indices that allow these
wavelets to be a basis of some suitable functional space on
$Q_\emptyset$. For example, to get a basis of $L^2(\R^3,\R^3)$, one
must provide 21 different ``mother'' wavelets (7 for each component)
and the same number is required for divergence-free vector fields.

In the dyadic models of turbulence the phase-space $\mathcal
Q\times\mathcal A$ is simplified to $\mathcal Q$ (in the case
of~\cite{MR2095627} or our dyadic model on a
tree~\cite{barbiaflamor,bianchi2013uniqueness}) or even a quotient of
$\mathcal Q$ (in classical shell models of turbulence that follow
Littlewood-Paley decomposition). The non-linear interaction is
constructed anew to be elementary but retain some of the main
properties of the bilinear term in Euler equations.

In the present work in particular we identify $\mathcal Q$ with the
tree $J$ through an isomorphism for which the relation $\subseteq$ on
$\mathcal Q$ corresponds to $\geq$ on $J$.

More precisely, let $Q_\emptyset$ be the unit cube of $\R^d$, which is
divided into $N=2^d$ cubes of side $\frac12$ which are labelled
$Q_1,Q_2,\dots,Q_N$ in some fixed way.

To each $j\in J$ we associate one cube $Q_j$ of side $2^{-|j|}$.
Above we defined $Q_j$ for $|j|=0,1$. Then recursively, each cube
$Q_j$ is divided into $N$ cubes of half side labelled
$Q_{j1},\dots,Q_{jN}$ following the same ordering as for $j=\emptyset$
in such a way that for all $j,k\in J$ the homothety that maps
$Q_\emptyset$ to $Q_j$ also maps $Q_k$ to~$Q_{jk}$.

Then for a.e.~point $x\in Q_\emptyset$ it is well defined the sequence
$\emptyset=x_0<x_1<x_2<\dots$ of elements of $J$ such that $|x_n|=n$
and $x\in Q_{x_n}$, and we will identify $x$ with $(x_n)_{n\geq0}$
when convenient.

Consider a real function $\psi_\emptyset$ on $Q_\emptyset$, the
``mother'' of the wavelets and for all $j\in J$, let
$\psi_j(x)\coloneqq 2^{d|j|/2}\psi_\emptyset(2^{|j|}x+\theta_j)$, where
$\theta_j$ is such that $\psi_j$ is supported on $Q_j$, the rescaling
being the correct one to have all $\psi_j$'s with the same $L^2$-norm.

Given this family of wavelets, we can associate a real function
$v(t,x)$ on $Q_\emptyset$ to any solution $(v_j(t))_j$ of the abstract
model~\eqref{e:model} through
equation~\eqref{e:wavelet_recomposition_t}.

The regularity in space of the field $v(t,\cdot)$ can be studied by
introducing suitable norms on the set of functions from $J$ to
$\R$. In particular, given $u:J\to\R$, $p\in[1,\infty]$ and $s\in\R$,
define a sequence $(\epsilon_n)_{n\geq0}$ by
\[
\epsilon_n
\coloneqq\begin{cases}
2^{ns}2^{dn(\frac12-\frac1p)}\Bigl(\sum_{|j|=n}|u_j|^p\Bigr)^{1/p} & p<\infty\\
2^{ns}2^{dn/2}\max_{|j|=n}|u_j| & p=\infty.
\end{cases}
\]

Then (see Meyer~\cite{mey92})
\begin{equation}\label{e:besov_cond_epsilon}
\sum_{j\in J}u_j\psi_j\in B_p^{s,q}
,\qquad\textup{if and only if}\qquad
\epsilon\in l^q(\N).
\end{equation}

With this identification of the Besov spaces at hand, we formally
introduce the spaces corresponding to the usual function spaces $H^s$,
$W^{s,p}$ and $C^s$ for sequences of real numbers indexed by $J$.

\begin{defi}\label{d:function_spaces}
For all $s\in\R$ we introduce the space $H^s$ of the maps $u:J\to\R$
such that the norm
\[
\|u\|_{H^s}
\coloneqq \Bigl(\sum_{j\in J}2^{2s|j|}u_j^2\Bigr)^{1/2}
\]
is finite. In particular let $H\coloneqq H^0=l^2(J)$.

Moreover, for all $s\in\R$ and $p\geq1$ we introduce the space
$W^{s,p}$ of the maps $u:J\to\R$ such that the norm
\[
\|u\|_{W^{s,p}}
\coloneqq 
\Bigl(\sum_{j\in J}2^{ps|j|}2^{d(\frac p2-1)|j|}|u_j|^p\Bigr)^{1/p}
\]
is finite. In particular $W^{s,2}=H^s$.

Finally, for all $s\in(0,1)$ we introduce the space $C^s$ of the maps
$u:J\to\R$ such that
\[
\sup_{n\geq1}\biggl(ns+\frac12dn+\max_{|j|=n}\log_2|u_j|\biggr)
\]
is finite.
\end{defi}
By condition~\eqref{e:besov_cond_epsilon}, these spaces correspond to
the usual ones for the recomposed function $\sum_{j\in J}u_j\psi_j$.

To make explicit the link between Besov norms and the exponents of the
structure function, we refer to the works by Eyink~\cite{eyink1995besov} and by
Perrier and Basdevant~\cite{perrier1996besov}. In the latter it is proven that if $\zeta_p$
is defined as usual by~\eqref{e:zeta_p_def}, then
\[
\zeta_p
=\sup\{s<p:u\in B_p^{s/p,\infty}\}.
\]
Thus if $u(x)=\sum_{j\in J}u_j\psi_j(x)$, by
condition~\eqref{e:besov_cond_epsilon},
\begin{equation}\label{e:defzetap}
\zeta_p
=\min\biggl\{p;d-\frac p2d-\limsup_{n\to\infty}\frac1n\log_2\sum_{|j|=n}|u_j|^p\biggr\}.
\end{equation}

In Appendix~\ref{app:physical} we give some other argument, not fully
rigorous, to show that this is indeed the correct exponent.

\section{Well-posedness and regularity}\label{s:general_existence}
% !TeX spellcheck = en_GB
In this section we will deal with the main model~\eqref{e:model} in
the abstract setting of the dyadic model on a tree. After some general
results we will restrict ourselves to the repeated coefficients model
and get a deeper understanding in that case.

Recall that $H=l^2(J)$.

\begin{defi}A \emph{componentwise solution} is a family $v=(v_j)_{j\in J}$ of
non-negative differentiable functions such that~\eqref{e:model} is
satisfied. A \emph{Leray solution} is a componentwise solution in
$L^\infty(\R^+;H)$.
\end{defi}
It has been proved in~\cite{barbiaflamor} that if $d_j\equiv1$ then
for any initial condition with non-negative components, there exists
at least one Leray solution. The argument is classical by Galerkin
approximations. The generalization to the model of this paper is
straightforward. Uniqueness of solutions is an open problem even for
the model with $d_j\equiv1$ and is a subtle matter. Uniqueness in fact
does not hold if one drops the non-negativity condition, but it is not
easy to exploit that hypothesis. One way to do that is a trick
presented in~\cite{BarFlaMor2010CRAS}, but the required estimates of
terms of the kind $\int_0^tX_n^3(s)ds$ for large $n$ are difficult to
generalize to other settings. (The more promising attempts for the dyadic
can be found in~\cite{BarMor12} and~\cite{BarEtAl2016}.) In the case of the tree dyadic model with $d_j\equiv1$, weak uniqueness is proven for a stochastically perturbed version in~\cite{bianchi2013uniqueness}.

\subsection{Constant Leray solutions}

From now on we will consider only Leray solutions $u=(u_j)_{j\in J}$,
not depending on time, that is

\[
0=c_ju_{\bar\jmath}^2-\sum_{k\in\mathcal O_j}c_ku_ju_k
,\qquad j\in J,
\]
yielding the fundamental recursion
\begin{equation}\label{e:recur_uj}
d_ju_{\bar\jmath}^2
=2^\alpha\sum_{k\in\mathcal O_j}d_ku_ju_k
,\qquad j\in J,
\end{equation}
because of the choice of the coefficients $(c_j)_{j\in J}$ given for this model.

One could try to find such a solution using~\eqref{e:recur_uj}
recursively, but there are two difficulties. Firstly, with
$u_{\bar\jmath}$ and $u_j$ given, the $N=2^d$ variables $u_k$ for
$k\in\mathcal O_j$ are not fixed by this single equation: there are
$N-1$ degrees of freedom left in their choice. Secondly, it is
difficult to prove that any such solution really belongs to $H$.  In
fact under some technical hypothesis, it will turn out that there
exists a unique Leray solution, so all choices but one give sequences
of numbers $u_j$ satisfying the recursion but not belonging to~$H$.

Both difficulties can be overcome by a sort of pull-back technique,
using the recursion \emph{backwards}. We will arbitrarily fix $u_j$ for
all $j\in J$ with given large generation $|j|=n$, then compute $u_k$
for the lower generations $|k|<n$ and then let $n\to\infty$, finally
proving convergence by compactness.

We will need to introduce the new variables $q_j$'s. Given $u$
satisfying the recursion~\eqref{e:recur_uj}, let
\begin{equation}\label{e:qj_def}
	q_j\coloneqq\log_2\biggl(\frac{u_j}{u_{\bar\jmath}\sqrt{d_j}}\biggr)
	, \qquad j\in J.
\end{equation}
Then $u$ can be recovered from $q=(q_j)_{j\in J}$, by
\begin{equation}\label{e:u_from_q}
	u_j=f2^{\sum_{h\leq j}q_h}\prod_{k\leq j}\sqrt{d_k}.
\end{equation}
The recursion~\eqref{e:recur_uj} rewrites equivalently in terms of $q$ as
\begin{equation}\label{e:recur_qj}
	q_j
	=-\frac12\alpha-\frac12\log_2\biggl(\sum_{k\in\mathcal O_j}d_k^{3/2}2^{q_k}\biggr).
\end{equation}

Before stating the theorem of existence, let us detail the
construction of the asymptotic Leray solution.

Let us fix $x\in\mathbb R$, define $q^{(n)}=(q^{(n)}_j)_{j\in
  J}$ for $n\geq1$ by
\begin{equation}\label{e:recur_qnj1}
	\begin{split}
		q^{(n)}_j	&\coloneqq 0 ,\qquad |j|>n\\
		q^{(n)}_j	&\coloneqq x ,\qquad |j|=n,
	\end{split}
\end{equation}
and then, recursively as $|j|$ decreases,
\begin{equation}\label{e:recur_qnj}
	q^{(n)}_j
	\coloneqq-\frac12\alpha-\frac12\log_2\biggl(\sum_{k\in\mathcal O_j}d_k^{3/2}2^{q_k^{(n)}}\biggr)
	,\qquad |j|<n.
\end{equation}
Finally, if the limit exists, we define
\[
	\tilde q_j\coloneqq\lim_{n\to\infty}q^{(n)}_j
	,\qquad j\in J,
\]
and $\tilde u$ from $\tilde q$ by~\eqref{e:u_from_q},
\begin{equation*}%\label{e:def_ujtilde}
	\tilde{u}_j \coloneqq f2^{\sum_{h\leq j}\tilde{q}_h}\prod_{k\leq j}\sqrt{d_k},\qquad j\in J.
\end{equation*}

\begin{remark}
It should be noted here that a solution of the above form is
reminiscent of the self-similar functions obtained by multiplicative
cascades of wavelet coefficients, which have been introduced from a
physical point of view by Benzi et al.~\cite{benzi1993random} and
mathematically formalized by Arneodo, Bacry and Muzy
in~\cite{arneodo1998random}. (See also
Riedi~\cite{riedi1999multifractal} for a comprehensive treatment of
multifractal processes, Jaffard~\cite{jaffard1997multifractal} for
detailed derivation of self-similar function in dimension $d$ and
Barral, Jin and Mandelbrot~\cite{barjinman10} for recent developments 
and more references.) The main difference is that here the multipliers
$2^{\tilde{q}_j}\sqrt{d_j}$ for $j\in J$ are not i.i.d.~random
variables, but a family of positive numbers (bounded from above and
away from zero).
\end{remark}

We can now state a first simple existence result.
\begin{thm}\label{t:existence_hr}
Suppose that the positive coefficients $(d_j)_{j\in J}$ are globally boun\-ded from above
and away from zero, that is
\[
\sup_{j\in J}\log_2d_j-\inf_{j\in J}\log_2d_j=:L<\infty
\]
Then there exists a constant componentwise solution $\tilde u$
of~\eqref{e:model} such that its coefficients $\tilde q_j$ defined as
in equation~\eqref{e:qj_def} satisfy recursion~\eqref{e:recur_qj} and
are bounded.

Moreover $\tilde u\in H^r$ for all
\begin{equation}\label{e:bd_r_existence}
r<\frac13\biggl(\alpha-\frac d2\biggr)-L.
\end{equation}
In particular, if $\alpha>\frac d2$ and

\[
\frac{\sup_{j\in J}d_j}{\inf_{j\in J}d_j}\leq2^{\frac13(\alpha-d/2)},
\]
then there exists a constant Leray solution.
\end{thm}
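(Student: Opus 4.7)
The proof follows the pull-back construction outlined after the statement: fix $x\in\R$, define $q^{(n)}$ by \eqref{e:recur_qnj1}--\eqref{e:recur_qnj}, and extract a limit as $n\to\infty$.  Setting $D_{\min}=\inf_j d_j$ and $D_{\max}=\sup_j d_j$, monotonicity of the logarithm in \eqref{e:recur_qnj} shows that if $q^{(n)}_k\in[\underline q,\overline q]$ for every $k\in\mathcal O_j$, then
\[
-\tfrac{\alpha+d}{2}-\tfrac34\log_2 D_{\max}-\tfrac{\overline q}{2}\;\le\; q^{(n)}_j\;\le\; -\tfrac{\alpha+d}{2}-\tfrac34\log_2 D_{\min}-\tfrac{\underline q}{2}.
\]
The map sending $[\underline q,\overline q]$ to this new interval is affine, contracts widths by a factor $1/2$, and has a unique fixed interval $[a_\infty,b_\infty]$ of width $3L/2$.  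Iterating $n-|j|$ times starting from the singleton $\{x\}$ then yields a bound on $q^{(n)}_j$ uniform in $(n,j)$.

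With uniform bounds in hand, a Cantor diagonal extraction selects a subsequence $(n_k)$ along which $\tilde q_j:=\lim_k q^{(n_k)}_j$ exists for every $j\in J$.  Continuity of the right-hand side of \eqref{e:recur_qj} on compact sets shows that $\tilde q$ satisfies the fixed-point recursion and remains bounded.  Reconstructing $\tilde u$ via \eqref{e:u_from_q} produces a positive family satisfying \eqref{e:recur_uj}, equivalently a constant componentwise solution of \eqref{e:model}.

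For the regularity, I iterate the identity $\tilde u_j^2=\tilde u_{\bar\jmath}^2\,d_j\,2^{2\tilde q_j}$ to get
\[
\sum_{|j|=n+1}\tilde u_j^2=\sum_{|l|=n}\tilde u_l^2\sum_{k\in\mathcal O_l}d_k\,2^{2\tilde q_k}.
\]
Bounding each inner sum by $N D_{\max}\,2^{2b_\infty}$ and using the explicit formulas $a_\infty+b_\infty=-\tfrac{2(\alpha+d)}{3}-\tfrac12\log_2(D_{\max}D_{\min})$ and $b_\infty-a_\infty=\tfrac32 L$ to compute $b_\infty$, a direct calculation gives $\sum_{k\in\mathcal O_l}d_k\,2^{2\tilde q_k}\le 2^{(d-2\alpha)/3+2L}$.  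Iterating and summing with weight $2^{2rn}$ then yields $\|\tilde u\|_{H^r}^2<\infty$ precisely when $r<\tfrac13(\alpha-d/2)-L$; the Leray statement follows by taking $r=0$ under the stated ratio condition.  The main technical subtlety is the sharpness of this layerwise multiplier: worst-case $d_k$ and $\tilde q_k$ are tied together through the recursion, and only the explicit solution of the fixed-point system for $(a_\infty,b_\infty)$ reveals the clean loss of exactly $2L$ over the homogeneous value $2^{(d-2\alpha)/3}$.
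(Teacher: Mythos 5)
Your proof is correct and follows essentially the same route as the paper's: the pull-back construction with a uniform interval bound on the $q^{(n)}_j$ (the paper exhibits the invariant interval directly and starts $x$ inside it, while you reach the same fixed interval $[a_\infty,b_\infty]$ as the limit of a contraction, which is equivalent), followed by diagonal extraction and a geometric-series estimate giving the same threshold $r<\tfrac13(\alpha-d/2)-L$. Your layer-by-layer recursion for $\sum_{|j|=n}\tilde u_j^2$ is just a repackaging of the paper's direct bound $\tilde u_j^2\le f^2 2^{(t+2b)|j|}$ summed over the $2^{dn}$ nodes of generation $n$.
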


\begin{remark}
	We would like to stress here that we do not claim that condition~\eqref{e:bd_r_existence} is
	sharp, nevertheless it defines a class suitable to prove uniqueness.
\end{remark}

\begin{proof}[Proof of Theorem~\ref{t:existence_hr}]
Let $t\coloneqq\sup_{j\in J}\log_2d_j$ and $s\coloneqq\inf_{j\in J}\log_2d_j$ with $t-s=L$.

For any $n\geq1$, define $q^{(n)}$ as in~\eqref{e:recur_qnj1}
and~\eqref{e:recur_qnj}, starting with some $x\in\mathbb R$ that will
be fixed in the sequel. Let $a<b$ be given real numbers. If
$q^{(n)}_k\in[a,b]$ for $k\in\mathcal O_j$, then %
by~\eqref{e:recur_qnj}
\[
-\frac12\alpha-\frac12d-\frac12b-\frac34t
\leq q_j^{(n)}
\leq-\frac12\alpha-\frac12d-\frac12a-\frac34s, 
\]
and by letting
\[
a=-\frac13(\alpha+d)-t+\frac12s
,\qquad\text{and}\qquad
b=-\frac13(\alpha+d)-s+\frac12t,
\]
we get $q^{(n)}_j\in[a,b]$. Thus if $x$ is chosen inside $[a,b]$, by
induction all the components lie inside the same interval.

By compactness of $[a,b]$ and a diagonal extraction argument, we can choose a
subsequence $(n_i)_i$ such that $q^{(n_i)}_j$ converges for all $j\in
J$ to some number $\tilde q_j\in[a,b]$. The family $\tilde q=(\tilde
q_j)_{j\in J}$ satisfies recursion~\eqref{e:recur_qj} by
construction. Then $\tilde u$ obtained from $\tilde q$ by~\eqref{e:u_from_q}
is a constant componentwise solution.

Finally, if $r$ satisfies condition~\eqref{e:bd_r_existence}, then
\begin{equation*}
\begin{split}
\|\tilde u\|_{H^r}^2
&=\sum_{j\in J}2^{2r|j|}\tilde u_j^2
=\sum_{j\in J}f^22^{2r|j|}\prod_{k\leq j}d_k2^{2\sum_{h\leq j}\tilde q_h}\\
&\leq f^2\sum_{j\in J}2^{(2r+t+2b)|j|}
=f^2\sum_{i=0}^\infty2^{(2r+t+2b+d)i}
<\infty, 
\end{split}
\end{equation*}
where the last inequality holds because
$2r+t+2b+d=2r+2L-\frac23\alpha+\frac13d<0$.
\end{proof}

Uniqueness of constant solutions holds in a very large class, namely,
the union of $H^r$ for all $r\in\mathbb R$.

\begin{thm}\label{t:atmostonehr}
	Under the same hypothesis of Theorem~\ref{t:existence_hr}, for
	all $s\in\mathbb R$ there exists at most one constant
	componentwise solution in $H^s$.
\end{thm}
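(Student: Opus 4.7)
The plan is to compare two putative constant componentwise solutions $u,u'\in H^s$ through the $q$-variables defined in~\eqref{e:qj_def} and to exploit a contraction that appears when one passes from children to parent in the tree. Writing~\eqref{e:recur_qj} for both $q$ and $q'$ and taking the quotient, the difference $\Delta_j\coloneqq q_j-q'_j$ satisfies
\[
2^{-2\Delta_j}=\sum_{k\in\mathcal O_j}p'_k\,2^{\Delta_k},
\qquad p'_k\coloneqq\frac{d_k^{3/2}\,2^{q'_k}}{\sum_{l\in\mathcal O_j}d_l^{3/2}\,2^{q'_l}}.
\]
Since the right-hand side is a weighted average of the $2^{\Delta_k}$'s, whenever $|\Delta_k|\leq M$ for all $k\in\mathcal O_j$ one has $|\Delta_j|\leq M/2$, and iterating gives $\sup_{|j|=m}|\Delta_j|\leq 2^{-(n-m)}\sup_{|j|=n}|\Delta_j|$ for $0\leq m\leq n$. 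The problem thus reduces to showing $\sup_{|j|=n}|\Delta_j|=o(2^n)$.

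The key step, and main obstacle, is the stronger claim that for every constant componentwise solution $u\in H^s$ the family $(q_j)_{j\in J}$ is in fact uniformly bounded; this is where the $H^s$ hypothesis enters. I would argue by contradiction. A direct inspection of~\eqref{e:recur_qj} combined with the bound $L\coloneqq\sup_j\log_2 d_j-\inf_j\log_2 d_j<\infty$ gives constants $C_1\leq C_2$ (depending on $\alpha$, $N$, $L$) such that for every $j\in J$
\[
q_k\leq -2q_j+C_2\ \text{for every }k\in\mathcal O_j,
\qquad \max_{k\in\mathcal O_j} q_k\geq -2q_j+C_1.
\]
Assuming $\sup_j|q_j|=\infty$, I would pick $j_0$ with $|q_{j_0}|$ much larger than $|C_1|+|C_2|$ and build inductively an infinite descending branch $j_0<j_1<j_2<\cdots$ along which $\operatorname{sign}(q_{j_{i+1}})=-\operatorname{sign}(q_{j_i})$ and $|q_{j_{i+1}}|\geq 2|q_{j_i}|-(|C_1|+|C_2|)$: when $q_{j_i}$ is very positive all children are very negative, so any choice of $j_{i+1}$ works; when $q_{j_i}$ is very negative I pick a child at which the max above is essentially attained.

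Along this branch $|q_{j_i}|$ grows at least geometrically, like $2^i|q_{j_0}|/2$, and the signs alternate, so the partial sums $S_n\coloneqq\sum_{i=0}^n q_{j_i}$ have $|S_n|\gtrsim 2^n|q_{j_0}|$ with alternating sign in $n$. By the explicit formula~\eqref{e:u_from_q},
\[
\log_2 u_{j_n}=\log_2 f+S_n+\tfrac12\sum_{k\leq j_n}\log_2 d_k=S_n+O(n),
\]
so for the infinitely many $n$ with $S_n>0$ the value $u_{j_n}$ is \emph{doubly} exponentially large in $n$, contradicting the one-sided bound $u_{j_n}\leq\|u\|_{H^s}\,2^{-sn}$ that follows from $u\in H^s$.

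This would prove that both $(q_j)$ and $(q'_j)$ are uniformly bounded, hence so is $(\Delta_j)$, and the contraction of the first paragraph then forces $\Delta_j=0$ for every $j\in J$; the formula~\eqref{e:u_from_q} finally yields $u=u'$. The delicate point is really the uniform boundedness of the $q_j$'s: the $H^s$ hypothesis gives only an \emph{upper} bound on each $u_j$ and cannot be used symmetrically, but the alternating-sign growth forced by the recursion guarantees a positive $S_n$ for infinitely many $n$, which is all one needs from this one-sided estimate.
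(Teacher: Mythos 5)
Your proposal is correct, but it is architected differently from the paper's proof, even though both run on the same combinatorial engine. The paper compares an arbitrary componentwise solution $u'$ against the \emph{specific} solution constructed in Theorem~\ref{t:existence_hr} (whose $q_j$'s are bounded by construction), and reads the recursion \emph{forward}: from $\sum_k d_k^{3/2}2^{q_k+2q_j}=2^{-\alpha}$ it extracts, at each node where the two solutions differ, a child at which the difference at least doubles with flipped sign, then pairs consecutive terms to show the partial sums of the differences grow like $2^n$, so that $u'_{j_n}\geq C2^{\lambda^n}$ and $u'$ lies in no $H^s$. You instead (i) prove that \emph{every} constant componentwise solution in $H^s$ has uniformly bounded $q_j$'s, by running the doubling/alternating-sign branch argument on $q$ itself rather than on a difference of two solutions, and (ii) read the recursion \emph{backward} as the contraction $|\Delta_j|\leq\tfrac12\max_{k\in\O{j}}|\Delta_k|$, which annihilates any bounded difference. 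Your route is self-contained (it does not lean on the existence construction) and symmetric in the two solutions; the paper's route gives a bit more, namely that the constructed solution is the only candidate and every other componentwise solution fails to be in \emph{any} $H^s$. One point to tighten: your claim that $|S_n|\gtrsim 2^n|q_{j_0}|$ with alternating sign does not follow from $|q_{j_n}|-\sum_{i<n}|q_{j_i}|$ alone (that difference can be $O(nc)$ under exact doubling); you need to group consecutive terms, $q_{j_{2i-1}}+q_{j_{2i}}\geq |q_{j_{2i-1}}|-c$, exactly as the paper does with $\epsilon_{n+1}+\epsilon_{n+2}\geq 2^{n+1}\epsilon_0$, after which only the one-sided bound $u_{j_n}\leq\|u\|_{H^s}2^{-s|j_n|}$ is needed, as you correctly observe.
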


\begin{proof}
	Let $u$ be the solution to~\eqref{e:model} given by
        Theorem~\ref{t:existence_hr} and let $u'$ be a componentwise
        solution different from $u$.  Let $q$ be defined from $u$ as
        in equation~\eqref{e:qj_def} and $p$ be analogously defined
        from $u'$. We will show that since the coefficients $q_j$'s
        are bounded, then $u'$ cannot belong to $H^r$ for any $r$.

        Take $j_0\in J$ such that $q_{j_0}\neq p_{j_0}$ is the minimal
        generation $|j_0|$ where $p$ and $q$ differ. Suppose that
        $p_{j_0}=q_{j_0}+\epsilon_0$, with $\epsilon_0>0$ (the other
        case being analogous).  We can define recursively the sequence
        $(j_n)_{n\geq0}$ in $J$ by
        \[
        j_{n+1}= \begin{cases} \displaystyle\argmin_{k\in\mathcal
        O_{j_n}}(p_k-q_k) & n \text{
        even} \\ \displaystyle\argmax_{k\in\mathcal O_{j_n}}(p_k-q_k)
        & n \text{ odd}, \end{cases} \] and
        let \[ \epsilon_n \coloneqq p_{j_n}-q_{j_n} ,\qquad
        n\geq1.  \]
	
	By~\eqref{e:recur_qj}, both $p$ and $q$ satisfy
	\[
	\sum_{k\in\mathcal O_j}d_k^{3/2}2^{q_k+2q_j}=2^{-\alpha}
	,\qquad j\in J,
	\]
	hence
	\[
	\min_{k\in\mathcal O_j}(p_k+2p_j-q_k-2q_j)
	\leq0
	\leq\max_{k\in\mathcal O_j}(p_k+2p_j-q_k-2q_j),
	\]
	yielding that
	\[
	\min_{k\in\mathcal O_j}(p_k-q_k)
	\leq-2(p_j-q_j)
	\leq\max_{k\in\mathcal O_j}(p_k-q_k).
	\]
	These inequalities hold for all $j_n$, so that
	\begin{equation*}
		\begin{cases}
			\epsilon_{n+1}&\leq-2\epsilon_n, \qquad n\textup{ even}\\
			\epsilon_{n+1}&\geq-2\epsilon_n, \qquad n\textup{ odd},
		\end{cases}
	\end{equation*}
	hence, for all $n$ even we have $\epsilon_{n+2}\geq 4\epsilon_n$ and
	 $\epsilon_n\geq2^n\epsilon_0$. Moreover
	\[
	\epsilon_{n+1}+\epsilon_{n+2}
	\geq \epsilon_{n+1}-2\epsilon_{n+1}
	=-\epsilon_{n+1}
	\geq 2\epsilon_n
	\geq 2^{n+1}\epsilon_0
	,\qquad n\textup{ even},
	\]
	yielding that for all $n$ even, $\sum_{i=0}^n\epsilon_i
	\geq2^{n-1}\epsilon_0$.
	
	Since the coefficients $q_j$ are bounded by
        Theorem~\ref{t:existence_hr}, then for $n$ even we have
        \[
	       \sum_{i=0}^np_{j_i}\geq2^{n-1}\epsilon_0-nc 
        \]  
        and
        hence by~\eqref{e:u_from_q} $u'_{j_n}\geq C2^{\lambda^n}$ for
        $n$ even and large, with suitable constants $\lambda>1$ and
        $C>0$, yielding that $u'$ cannot belong to $H^s$ for any $s$.
\end{proof}

\subsection{Model with repeated coefficients}

From here on we will restrict ourselves to the model with repeated
coefficients, which allows for direct computation of many quantities
while still showing interesting features like intermittency and a
multifractal structure function.

\begin{defi}\label{def:rep_coeff}
We say that the model has repeated coefficients and call it RCM if the
set $\{d_k:k\in\mathcal O_j\}$ (considered with multiplicities) does
not depend on $j$. In this case we pose
$\{\delta_\omega:\omega\in\Omega\}=\{d_k:k\in\mathcal O_j\}$ for all
$j\in J$, for some $\Omega$ of cardinality $N$. If moreover all the
$\delta_\omega$ are equal we say that the model is flat.
\end{defi}

We also introduce the log-$s$-norm of the
coefficients, that will be used often. For $s\in\R\setminus\{0\}$ let
\begin{equation}\label{e:def_ell}
	\ell_s
	\coloneqq\frac1s\log_2\biggl(\frac1N\sum_{\omega\in\Omega}\delta_\omega^s\biggr).
\end{equation}
This can be completed with
\begin{gather*}
	\ell_0
	\coloneqq\frac1N\sum_{\omega\in\Omega}\log_2\delta_\omega,\\
	\ell_{-\infty}
	\coloneqq\lim_{s\to-\infty}\ell_s =\min_\omega\delta_\omega
	\qquad\textup{and}\qquad
	\ell_\infty \coloneqq\lim_{s\to\infty}\ell_s =\max_\omega\delta_\omega,
\end{gather*}
to get a bounded, non-decreasing and continuous function $\ell$ on
$[-\infty,\infty]$. Moreover, $\ell$ is constant if and only if the
model is flat.

We are ready to state the main result for the constant solutions of RCM.

\begin{thm}\label{thm:uniq_stat_sol}
The RCM admits a constant componentwise solution $u$, which for all
$p\geq1$ lies in $W^{s,p}$ if and only if $s<s_0(p)$,
\[
	s_0(p)\coloneqq\frac{1}{3}\biggl(\alpha-\frac{d}{2}\biggr)+\frac{1}{2}\bigl(\ell_{3/2}-\ell_{p/2}\bigr).
\]
This is the unique constant solution inside any $H^s$. It has an
explicit formula given by
	\begin{equation}\label{e:solution_with_q}
	u_j = f\cdot 2^{q|j|+q}\prod_{k\leq j}\sqrt{d_k}
                ,\qquad j\in J,
	\end{equation}
where
	\begin{equation}\label{eq:defq}
		q\coloneqq-\frac13(\alpha+d)-\frac12\ell_{3/2}.
	\end{equation}
	A sufficient condition for the solution to be Leray is
        $\alpha>\frac{d}{2}$, for in that case $s_0(2)>0$ and hence
        $u\in H$.
\end{thm}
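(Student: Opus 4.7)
The plan is to exploit the repeated-coefficients structure by looking for a solution of the recursion~\eqref{e:recur_qj} that is constant in $j$. Setting $q_j\equiv q$ and using that $\sum_{k\in\mathcal O_j}d_k^{3/2}=\sum_{\omega\in\Omega}\delta_\omega^{3/2}=N\cdot 2^{(3/2)\ell_{3/2}}$ for every $j$, the recursion becomes a single scalar equation
\[
q=-\tfrac12\alpha-\tfrac12\log_2\bigl(2^q\cdot N\cdot 2^{(3/2)\ell_{3/2}}\bigr)
=-\tfrac12\alpha-\tfrac{q}{2}-\tfrac{d}{2}-\tfrac34\ell_{3/2},
\]
which is solved uniquely by $q=-\tfrac13(\alpha+d)-\tfrac12\ell_{3/2}$, as in~\eqref{eq:defq}. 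This constant value is finite and trivially bounded, so it lies inside the interval $[a,b]$ constructed in the proof of Theorem~\ref{t:existence_hr}. Running the pull-back scheme~\eqref{e:recur_qnj1}--\eqref{e:recur_qnj} with boundary value $x=q$ therefore yields $q^{(n)}_j=q$ for every $n,j$, and so the componentwise solution produced by Theorem~\ref{t:existence_hr} has $\tilde q_j\equiv q$. Inserting this into~\eqref{e:u_from_q} gives formula~\eqref{e:solution_with_q}.

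Next I would compute the $W^{s,p}$ norm directly from~\eqref{e:solution_with_q}. The decisive point is that the sum over a generation factorizes in the RCM:
\[
\sum_{|j|=n}\prod_{\emptyset<k\leq j}d_k^{p/2}
=\Bigl(\sum_{\omega\in\Omega}\delta_\omega^{p/2}\Bigr)^{n}
=2^{n\bigl(d+(p/2)\ell_{p/2}\bigr)},
\]
which one verifies by induction on $n$, peeling off the last generation and using that the multiset $\{d_k:k\in\mathcal O_{j'}\}$ is independent of $j'$. Plugging this into the definition of $\|u\|_{W^{s,p}}$ reduces the sum to a geometric series in $n$ with ratio
\[
2^{p\bigl(s+d/2+q+(1/2)\ell_{p/2}\bigr)},
\]
which converges if and only if $s+d/2+q+(1/2)\ell_{p/2}<0$; substituting the value of $q$ this is exactly $s<s_0(p)$. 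This simultaneously gives membership when $s<s_0(p)$ and non-membership (since all terms are positive) otherwise.

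Uniqueness inside any $H^s$ follows at once from Theorem~\ref{t:atmostonehr}, whose hypothesis on $L$ is automatic for the RCM because $\{d_k\}$ only takes the $N$ values $\{\delta_\omega\}$. For the Leray statement, taking $p=2$ and using that $\ell$ is non-decreasing one has $\ell_{3/2}\geq\ell_1$, so
\[
s_0(2)=\tfrac13\bigl(\alpha-\tfrac{d}{2}\bigr)+\tfrac12(\ell_{3/2}-\ell_1)\geq\tfrac13\bigl(\alpha-\tfrac{d}{2}\bigr)>0
\]
whenever $\alpha>d/2$, which forces $u\in W^{0,2}=H$.

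There is no real obstacle in this argument: everything is driven by the constant ansatz, and the only place where care is needed is the factorization of $\sum_{|j|=n}\prod_{k\leq j}d_k^{p/2}$, where one has to be sure that the repeated-coefficients hypothesis really lets the sum split independently of how the $\delta_\omega$ are assigned to the children of each node.
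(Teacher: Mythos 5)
Your proposal is correct and follows essentially the same route as the paper: solve the fixed-point equation for a constant $q$, recover $u_j$ via~\eqref{e:u_from_q}, use the generation-wise factorization $\sum_{|j|=n}\prod_{k\leq j}d_k^{p/2}=\bigl(\sum_{\omega\in\Omega}\delta_\omega^{p/2}\bigr)^{n}$ (the paper's Lemma~\ref{lem:alg_identity}) to reduce $\|u\|_{W^{s,p}}^p$ to a geometric series with ratio $2^{p(s-s_0(p))}$, and invoke Theorem~\ref{t:atmostonehr} for uniqueness. Your extra observation that the pull-back scheme with boundary value $x=q$ is stationary is a harmless (and correct) alternative to the paper's way of tying the explicit solution to the uniqueness class.
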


To prove this theorem, we will need the following Lemma.

\begin{lem}\label{lem:alg_identity}
        If the model is RCM, then for any real function~$\varphi$ and
	any positive integer $n$, \[ \sum_{|j|=n}\prod_{k\leq
	j}\varphi(d_k)
	=\biggl(\sum_{\omega \in \Omega}\varphi(\delta_\omega)\biggr)^{\!n}.  \]
\end{lem}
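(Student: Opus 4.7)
The plan is to prove the identity by induction on $n$, leveraging the defining property of RCM (Definition~\ref{def:rep_coeff}): the multiset $\{d_k : k \in \mathcal{O}_j\}$ coincides with $\{\delta_\omega : \omega \in \Omega\}$ regardless of the parent~$j$.

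For the base case $n=1$, the indices with $|j|=1$ are exactly the children of $\emptyset$, and the product $\prod_{k \leq j}\varphi(d_k)$ reduces to the single non-trivial factor $\varphi(d_j)$. Since the multiset $\{d_j : |j|=1\}$ equals $\{\delta_\omega : \omega \in \Omega\}$, the sum over $|j|=1$ is precisely $\sum_{\omega \in \Omega}\varphi(\delta_\omega)$, which matches the right-hand side.

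For the inductive step, I would group the sum over $|j|=n+1$ according to each node's father. Writing $j \in \mathcal{O}_{\bar\jmath}$ and splitting off the last factor via $\prod_{k \leq j}\varphi(d_k) = \varphi(d_j)\prod_{k \leq \bar\jmath}\varphi(d_k)$, the sum factorizes as
\[
\sum_{|j|=n+1}\prod_{k \leq j}\varphi(d_k)
= \sum_{|\bar\jmath|=n}\biggl(\prod_{k \leq \bar\jmath}\varphi(d_k)\biggr)\sum_{j \in \mathcal{O}_{\bar\jmath}}\varphi(d_j).
\]
By the RCM hypothesis, the inner sum equals $\sum_{\omega \in \Omega}\varphi(\delta_\omega)$ independently of $\bar\jmath$, so it factors out of the outer sum; applying the inductive hypothesis to what remains yields the desired $\bigl(\sum_{\omega}\varphi(\delta_\omega)\bigr)^{n+1}$.

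No step is a real obstacle: the argument is essentially clean bookkeeping built on the single structural identity for offspring multisets provided by RCM. The only cosmetic care is to interpret the product $\prod_{k \leq j}$ as ranging over the non-empty prefixes of $j$ (equivalently, to absorb the factor $\varphi(d_\emptyset)$ coming from $d_\emptyset=1$), so that it has exactly $|j|$ factors matching the exponent on the right-hand side.
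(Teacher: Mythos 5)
Your proof is correct. It differs from the paper's in mechanics, though not in substance: the paper dispenses with induction entirely and simply observes that both sides equal $\sum_{z\in\Omega^n}\varphi(\delta_{z_1})\varphi(\delta_{z_2})\cdots\varphi(\delta_{z_n})$ --- the left side via the labelling of each generation-$n$ node by the word in $\Omega^n$ recording which $\delta_\omega$ occurs at each of its prefixes (this labelling exists precisely because of the RCM hypothesis), and the right side by expanding the $n$-th power. Your induction peels off one generation at a time and factors the offspring sum out using the RCM hypothesis at each step; this is slightly longer but arguably more rigorous, since it avoids having to set up the bijection between $\{j:|j|=n\}$ and $\Omega^n$ explicitly, and it isolates exactly where the ``multiset independent of the parent'' property is used. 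One small caveat on your closing remark: the convention $k\leq j$ in this paper does include $k=\emptyset$, and $\varphi(d_\emptyset)=\varphi(1)$ need not equal $1$ for an arbitrary $\varphi$, so ``absorbing'' that factor is not automatic; your primary fix --- reading $\prod_{k\leq j}$ over the non-empty prefixes so that the product has exactly $|j|$ factors --- is the right one and is exactly the convention the paper's own displayed sum (with its $n$ factors) implicitly adopts, and it is harmless in every application, where $\varphi(x)=x^{s}$ and $d_\emptyset=1$.
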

\begin{proof} 
	Both sides of the identity are equal to
	\[
	\sum_{z\in\Omega^n}\varphi(\delta_{z_1})\varphi(\delta_{z_2})\dots\varphi(\delta_{z_n}).
	\qedhere
	\]
\end{proof}

\begin{proof}[Proof of Theorem~\ref{thm:uniq_stat_sol}]
        Since the model has repeated coefficients, we can look for a fixed
        point of recursion~\eqref{e:recur_qj},
        \[
        q
        =-\frac12\alpha-\frac12\log_2\biggl(\sum_{\omega\in\Omega}\delta_\omega^{3/2}2^q\biggr)
        =-\frac{1}{2}\alpha -\frac{1}{2}q-\frac{1}{2}\log_2\sum_{\omega\in \Omega}\delta_\omega^{3/2},
	\]
        which can be solved in $q$, yielding~\eqref{eq:defq}, thanks
        to the definition of $\ell_{3/2}$ in~\eqref{e:def_ell}.

        If we consider $q_j\equiv q$ and write the corresponding $u_j$ as in~\eqref{e:u_from_q}, we obtain~\eqref{e:solution_with_q}, and since $q$
        solves~\eqref{e:recur_qj}, then $(u_j)_{j\in J}$ is a constant
        componentwise solution. Uniqueness will follow from
        Theorem~\ref{t:atmostonehr} if we can prove that $u\in H^s=W^{s,2}$
        for $s<s_0(2)$.

	To show that $u\in W^{s,p}$ if and only if $s<s_0(p)$, we can apply
        Lemma~\ref{lem:alg_identity}, together with the definitions of
        $\ell_{p/2}$, $q$ and $s_0$, to compute

	\begin{equation*}
        \begin{split}
	        \|u\|_{W^{s,p}}^p & =\sum_{j\in J}2^{ps|j|}2^{d(\frac p2-1)|j|}u_j^p
			=f^p2^{pq}\sum_{n=0}^\infty 2^{[p(q+s)+d(\frac p2-1)]n}\sum_{|j|=n}\prod_{k\leq j}d_k^{p/2}\\
			& =f^p2^{pq}\sum_{n=0}^\infty 2^{[p(q+s)+d(\frac p2-1)]n}\biggl(\sum_{\omega \in \Omega}\delta_\omega^{p/2}\biggr)^{\!n}\\
			& =f^p2^{pq}\sum_{n=0}^\infty 2^{[p(q+s)+d(\frac p2-1)+\frac p2\ell_{p/2}+d]n}
			=f^p2^{pq}\sum_{n=0}^\infty 2^{p(s-s_0)n}.\qedhere
        \end{split}
    \end{equation*}
\end{proof}

\begin{remark}
        The constant solution of the RCM turned out to be what is
        usually called a \emph{binomial cascade}, but with
        \emph{deterministic} multipliers $w_k:=2^qd_k^{1/2}$.  In
        fact, in today's physical models, the multipliers of the
        wavelet coefficients are usually chosen to be i.i.d.~random
        variables (see
        again~\cite{benzi1993random,arneodo1998random}), but our
        solution does not exactly belong to this class, since the
        weights are deterministic and moreover there is the
        constraint
        \[
        \sum_{k\in\mathcal O_j}w_k^3=2^{-\alpha},
        \]
        which follows from~\eqref{eq:defq} and rules out independence.
        Models like this were studied for example by Meneveau and
        Sreenivasan in~\cite{meneveau1991multifractal} and in the
        seminal work by Eggleston~\cite{eggleston1949fractional}.
\end{remark}

\begin{remark}
        Notice that the $H^s$-regularity of the solution from
        Theorem~\ref{t:existence_hr} is much lower than what
        Theorem~\ref{thm:uniq_stat_sol} says. In fact the former was
        far from sharp in its generality, while the latter gives
        optimal regularity for RCM.

	For RCM we also have a closed form for the energy of the
        constant Leray solution, when $s_0(2)>0$: \[ \sum_{j\in
        J}u_j^2 =\frac{f^22^{2q}}{1-2^{-2s_0(2)}}.  \]
\end{remark}

\begin{remark}
	Lemma~\ref{lem:alg_identity} and
        Theorem~\ref{thm:uniq_stat_sol} may be generalized from RCM to
        the case where the set of the prescribed coefficients is fixed
        within each generation, but can change from one generation to
        the next one. This is not as general a case as the one
        considered in Theorems~\ref{t:existence_hr}
        and~\ref{t:atmostonehr}, but it still extends quite a lot the
        possible choices of coefficients.
\end{remark}

\section{Structure function}\label{s:structure_function}
In this section we prove some properties of the structure function for
the constant Leray solution of the RCM.  In particular we are
interested in comparing its behaviour with the Kolmogorov K41 law.

We work on the abstract model and hence, by virtue of the
considerations in Section~\ref{s:physical_space}, we may
take~\eqref{e:defzetap} as the definition of $\zeta_p$ for a constant
componentwise solution $(u_j)_{j\in J}$ of the abstract model.

We recall that $\zeta_p$ is then interpreted as the exponent of the
structure function for the reconstructed ``physical'' solution
$u(x)=\sum_{j\in J}u_j\psi_j(x)$.

\begin{thm}\label{t:Cs_and_zetap_RCM}
Consider an RCM. We introduce the quantity
\[
h=\frac13\biggl(\alpha-\frac d2\biggr)-\frac12(\ell_\infty-\ell_{3/2}).
\]
Suppose $h\in(0,1)$. Then there exists a unique constant Leray
solution which lies in $C^s$ if and only if $s\leq h$ and for which the exponents
$\zeta_p$ of the structure function are given by
\begin{equation}\label{eq:zetap_explicit}
  \zeta_p=\min\biggl\{p;\quad\frac{p}{3}\biggl(\alpha-\frac{d}{2}\biggr)+\frac{p}{2} (\ell_{3/2}-\ell_{p/2})\biggr\}
,\qquad p\geq0.
\end{equation}
This function is continuous, non-decreasing, concave, satisfies
$\zeta_0=0$ and $\zeta_3=\min\{3;\alpha-d/2\}$, has oblique asymptote
  of equation $hp+d-\log_2m$, where $m$ is the multiplicity of the
  largest $\delta_\omega$.
\end{thm}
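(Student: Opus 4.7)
The plan is to substitute the explicit formula $u_j=f\cdot 2^{q|j|+q}\prod_{k\leq j}\sqrt{d_k}$ from Theorem~\ref{thm:uniq_stat_sol} into the definitions of $C^s$ and $\zeta_p$, and read off all the qualitative properties by analyzing the log-moment generating function $\Lambda(s)\coloneqq\log_2(N^{-1}\sum_{\omega\in\Omega}\delta_\omega^s)$, which satisfies $\ell_s=\Lambda(s)/s$. Existence and uniqueness of a constant Leray solution come for free: since $\ell_{p/2}$ is non-decreasing in $p$, the threshold $s_0(p)$ from Theorem~\ref{thm:uniq_stat_sol} is non-increasing and tends to $h>0$ as $p\to\infty$, so $s_0(2)>0$ and the previous theorem provides the unique Leray solution.

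For $C^s$-regularity, at generation $n$ the quantity $\log_2 u_j$ is maximized by selecting, at every step, an offspring whose coefficient attains $\max_\omega\delta_\omega$; because the model has repeated coefficients such a choice is always available, and yields $\max_{|j|=n}\log_2 u_j=(q+\ell_\infty/2)n+O(1)$. Plugging into Definition~\ref{d:function_spaces} and substituting $q=-\tfrac{1}{3}(\alpha+d)-\tfrac{1}{2}\ell_{3/2}$ makes the coefficient of $n$ equal exactly to $s-h$, so the sup is finite iff $s\leq h$. For the structure-function exponents, I apply Lemma~\ref{lem:alg_identity} with $\varphi(d)=d^{p/2}$ to obtain $\sum_{|j|=n}\prod_{k\leq j}d_k^{p/2}=2^{n(d+(p/2)\ell_{p/2})}$; combining this with the explicit formula, the $\limsup$ in~\eqref{e:defzetap} equals $pq+d+(p/2)\ell_{p/2}$, and algebraic simplification produces precisely~\eqref{eq:zetap_explicit}.

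Finally, I analyze $\zeta_p=\min\{p,g(p)\}$ where $g(p)\coloneqq\tfrac{p}{3}[\alpha-d/2+\Lambda(3/2)]-\Lambda(p/2)$. Since $\Lambda$ is a cumulant generating function it is convex (Hölder), so $g$ is concave; hence $\zeta_p$ is a minimum of two concave functions, therefore concave and continuous. Concavity of $g$ combined with $g'(\infty)=\tfrac{1}{3}(\alpha-d/2)+\tfrac{1}{2}\ell_{3/2}-\tfrac{1}{2}\ell_\infty=h>0$ gives $g'\geq h>0$ throughout, so $g$ is strictly increasing and $\zeta_p$ is non-decreasing. The substitutions $p=0$ and $p=3$ give $\zeta_0=0$ and $\zeta_3=\min\{3,\alpha-d/2\}$ by direct inspection. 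For the asymptote, expanding $N^{-1}\sum_\omega\delta_\omega^s=(m/N)\cdot 2^{s\ell_\infty}(1+o(1))$ as $s\to\infty$, where $m$ is the multiplicity of $\max_\omega\delta_\omega$, yields $\Lambda(p/2)=(p/2)\ell_\infty+\log_2(m/N)+o(1)$, whence $g(p)=hp+d-\log_2 m+o(1)$. The main obstacle is the identification of $g$ with an affine transform of $\Lambda$; once that is in place, concavity, monotonicity, and the exact asymptote all drop out from standard convex analysis, and the rest is bookkeeping.
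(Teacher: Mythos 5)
Your proposal is correct and follows essentially the same route as the paper: the $\zeta_p$ formula via Lemma~\ref{lem:alg_identity} applied to the explicit solution, the $C^s$ threshold by maximizing $\prod_{k\le j}d_k$ along a branch of maximal $\delta_\omega$, concavity from convexity of $\Lambda(s)=s\ell_s$ (the paper proves this by Jensen, you by H\"older --- the same fact), monotonicity from concavity plus asymptotic slope $h>0$, and the asymptote from the multiplicity-$m$ dominance of $\delta_{\max}$. The only differences are notational (working with $\Lambda$ and $g'$ rather than $p\ell_p$ and $\zeta_p/p$), so there is nothing to add.
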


It is interesting to notice that when $\alpha=\frac d2+1$,
\[
\zeta_0= 0 \qquad \zeta_3=1,
\]
since these are physical requirements of turbulence theory and
$\alpha=\frac d2+1$ is the physically meaningful value. In particular
the second one arises from the (non-phenomenological) Kolmogorov
four-fifths law, as shown for example by Frisch
in~\cite{MR1428905}. With the same parameters the theorem also states
that the constant solution has H\"older regularity $h<1/3$ (unless the
model is flat), so the constant solution is one example of what the
second half of Onsager conjecture suggests. See also
Remark~\ref{rem:Onsager_local} below for more on this matter.
\begin{proof}
By the definition of $\zeta_p$ given in equation~\eqref{e:defzetap}
we need to show that
\[
\lim_{n\to\infty}\frac1n\log_2\sum_{|j|=n}u_j^p
=-p\frac\alpha3-p\frac d3-\frac p2\ell_{3/2}+\frac p2\ell_{p/2}+d.
\]
By Lemma~\ref{lem:alg_identity} and equation~\eqref{e:solution_with_q},
\[
\sum_{|j|=n}u_j^p
=f^p2^{pqn+pq}\sum_{|j|=n}\prod_{k\leq j}d_k^{p/2}
=f^p2^{pqn+pq}\biggl(\sum_{\omega\in\Omega}\delta_\omega^{p/2}\biggr)^n,
\]
so by equation~\eqref{eq:defq},
\[
\lim_{n\to\infty}\frac1n\log_2\sum_{|j|=n}u_j^p
=pq+\frac p2\ell_{p/2}+d
=-p\biggl(\frac\alpha3+\frac d3+\frac12\ell_{3/2}\biggr)+\frac p2\ell_{p/2}+d,
\]
as claimed.

We can check that continuity of $\zeta_p$ is a consequence of that of
$\ell_p$. Concavity follows from convexity of $p\ell_p$ which can be
proven by combining the definition and Jensen inequality: let
$\theta\in[0,1]$, then
\begin{multline*}
\log\sum_\omega\delta_\omega^{\theta p+(1-\theta)q}-\theta\log\sum_\omega\delta_\omega^p-(1-\theta)\log\sum_\omega\delta_\omega^q\\
=\log\sum_\omega\biggl(\frac{\delta_\omega^p}{\sum_i\delta_i^p}\frac{\sum_j\delta_j^q}{\delta_\omega^q}\biggr)^\theta\frac{\delta_\omega^q}{\sum_k\delta_k^q}%\\
\leq\log\biggl(\sum_\omega\frac{\delta_\omega^p}{\sum_i\delta_i^p}\biggr)^\theta
=0.
\end{multline*}
The limit of $\zeta_p/p$ as $p\to\infty$ is $h$ which is non-negative
by hypothesis, so monotonicity comes as a consequence of
concavity. The asymptote is an easy limit:
\[
\zeta_p-hp
=\frac p2(\ell_\infty-\ell_{p/2})
=-\log_2\biggl[\frac1N\sum_\omega\biggl(\frac{\delta_\omega}{\delta_{\max}}\biggr)^{p/2}\biggr],
\]
which converges to $d-\log_2m$ as $p\to\infty$.

As for H\"older regularity, by equation~\eqref{e:solution_with_q},
\[
\max_{|j|=n}u_j
=f2^{qn+q}\delta_{\max{}}^{n/2},
\]
so $ns+\frac12dn+\max_{|j|=n}\log_2u_j$ is bounded in $n$ if and only if
\[
s+d/2+q+\ell_\infty/2\leq0.
\]
Substituting $q$ by~\eqref{eq:defq} we get $s\leq h$.
\end{proof}

\begin{remark}
The first derivative of $\zeta_p$ is,
\[
\zeta_p'
=\frac13\biggl(\alpha-\frac d2\biggr)+\frac12\ell_{3/2}-\frac12\sum_\omega\frac{\delta_\omega^{p/2}}{\sum_i\delta_i^{p/2}}\log_2\delta_\omega,
\]
which for $p=0$ reduces to $\zeta_0'=\frac13(\alpha-\frac
d2)+\frac12\ell_{3/2}-\frac12\ell_0$. If this quantity is 1 or less,
then $p$ is never the minimum in equation~\eqref{eq:zetap_explicit},
thus $\zeta_p$ is strictly concave and smooth for all $p$. On the
other hand, if the derivative in 0 is larger than 1, then since $h<1$,
there exists $p_0>0$ such that $\zeta_p=p$ if and only if $p\leq p_0$.
\end{remark}

\begin{remark}
The condition $h>0$ is fundamental. If $h<0$ the right-hand side
of~\eqref{eq:zetap_explicit} is decreasing and then negative for large
$p$ and the arguments of Section~\ref{s:physical_model} are no longer
valid when $\zeta_p<0$, so we do not know how to compute the exponents
of the structure function for those values of $p$. If $h=0$
equation~\eqref{eq:zetap_explicit} holds, but $C^0$ is not defined.

The condition $h\leq1$ could be weakened, but it is very reasonable,
since $h\leq\frac13(\alpha-d/2)$ and usually $\alpha=1+d/2$.
\end{remark}

\subsection{Comparison to other models}
As we mentioned in the introduction, several models were suggested in
the literature for which the function $\zeta_p$ can be computed, and
there are also experimental data available, so we want to compare our
function to both.

The first model was given by Kolmogorov in~\cite{K41}, as a uniform
cascade of energy and it simply yields the line $\zeta_p=\frac{p}{3}$.
A different solution, trying to cope with the intermittency observed
in experimental data, led twenty years later to the development by
Obhukov and Kolmogorov of the
\emph{log-normal} model~\cite{obukhov1962some,K62}, which yields:
\[
	\zeta_p=\frac{p}{3}+\frac{\mu}{18}(3p-p^2).
\] 
However this model has the big drawback of being eventually
decreasing, which allows for supersonic velocities, as well as some
other issues. Nevertheless, it paved the way for subsequent models.

The $\beta$-model was introduced by Frisch et
al.~\cite{frisch1978simple} as a toy model to investigate some of the
fractal properties of turbulence, as suggested by Mandelbrot in
several papers, for
example~\cite{mandelbrot1974intermittent}. This
model is fractal by construction, but turns out to be monofractal,
again with a linear $\zeta_p$:
\[
	\zeta_p=\frac{p}{3}+(3-D)\biggl(1-\frac{p}{3}\biggr).
\]
This model was then generalized to a bifractal model, which is just a
mixture of two different $\beta$-models, combining into a piecewise
linear map, with one change of slope.

After the experimental results of Anselmet et
al.~\cite{anselmet1984high} became available, Frisch and
Parisi~\cite{parisi1985singularity} made the crucial remark that
$\zeta_p$ could be seen in a multifractal framework, one possible
example being the random
$\beta$-model~in~\cite{benzi1984multifractal}. Many more examples
followed, thanks to the vitality of the multifractal community.

Finally, She and L\'ev\^eque introduced in~\cite{she1994universal} a
phenomenological model based on fluctuation structures associated with
vortex filaments; it is free of parameters and has a good fit to
experimental data:
\[
	\zeta_p=\frac{p}{9}+2-2\biggl(\frac{2}{3}\biggr)^{p/3}.
\]

\begin{figure}[ht!]
	\centering
	\includegraphics[width=0.995\textwidth]{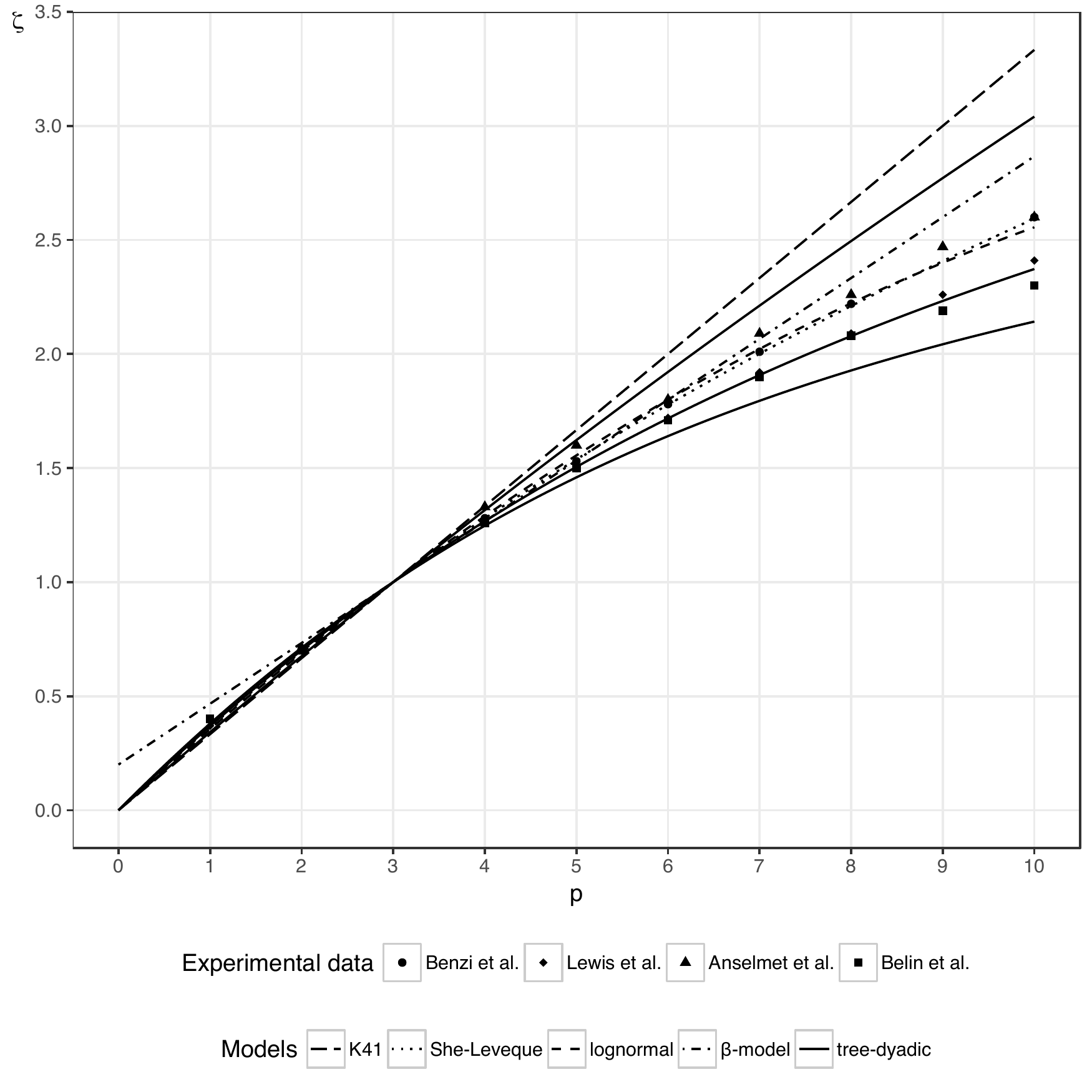}
	\caption{Comparison of $\zeta_p$ functions from different models and experimental data.}
	\label{fig:zp_plot}
\end{figure}

In Figure~\ref{fig:zp_plot} we show the plots of the functions
$\zeta_p$ for our model, with three different choices of parameters, and
some of the other ones cited here, as well as experimental data from
Anselmet et al.~\cite{anselmet1984high}, Belin et
al.~\cite{belin1996exponents}, Benzi et
al.~\cite{benzi1996generalized} and Lewis et
al.~\cite{PhysRevE.59.5457}.  The choice of parameters is the
following: in the log-normal model $\mu=0.2$, in the $\beta$ model
$D=2.8$, in the tree-dyadic
model
\[
(\log_2\delta_\omega)_{\omega}=(\lambda i, i=0,1,\dots,7)
\]
with $\lambda=0.1$ for the top one, $\lambda=0.2$ for the middle one
and $\lambda=0.2307$ for the bottom one.

Let us spend a couple more words on the choice of the coefficients for
the RCM. Given the number of degrees of freedom we have in the choice
of these coefficients (which are 7, since multiplicative constants for
$\delta_\omega$ do not count), it is not particularly informative to
show that we can fit precisely the experimental data. It is rather
more interesting to show that, even considering just linear steps in
the logarithms ---to reduce to a family with one degree of freedom--- we
can cover quite a variety of situations.  The extremal cases are
upwards the Kolmogorov line K41, (corresponding to the flat model,
with $\lambda=0$), and downwards $\lambda\approx0.2307$ which is close
to the constraint $h>0$.
\section{Fractality}\label{s:fractality}
In this section we consider again the physical field
$u(x)\coloneqq\sum_ju_j\psi_j(x)$ reconstructed from the constant
solution of the RCM. It is defined on the physical space $Q_\emptyset$
and is multifractal in nature. In particular for every level of
regularity (think of $C^s$ for example) there is a set of points $x$
for which $u$ around $x$ attains that regularity locally.

Anomalous dissipation depends on regularity; in the tradeoff
between low regularity and high Hausdorff dimension of the set, we
look for the critical set which accounts for most anomalous dissipation.

The first proposition computes the energy flow for a finite rooted
subtree and explicits the term which we will identify with anomalous
dissipation.

\begin{prop}\label{p:energy_flow_terms}
Let $T$ be a finite subset of $J$ with the property that $j\in
T\Rightarrow\bar\jmath\in T$, Let $\partial T$ be the set of nodes
outside $T$ with father in $T$ and let $v$ be a componentwise
solution of~\eqref{e:model}. Then
\[
\frac d{dt}\sum_{j\in T}v_j^2(t)
=2f^2v_\emptyset(t)-\sum_{j\in\partial T}2c_jv_{\bar\jmath}(t)^2v_j(t).
\]
\end{prop}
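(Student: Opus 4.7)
The plan is to multiply each equation of~\eqref{e:model} by $2v_j$ so as to produce $(v_j^2)'$ on the left, then sum over $j\in T$ and observe a telescoping cancellation between ``gain'' and ``loss'' terms. Concretely, differentiating the sum yields
\[
\frac d{dt}\sum_{j\in T}v_j^2
=\sum_{j\in T}2c_jv_{\bar\jmath}^2v_j-\sum_{j\in T}\sum_{k\in\mathcal O_j}2c_kv_j^2v_k,
\]
and the task reduces to matching each interior loss term against the gain term of its child.

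First I would reindex the double sum on the right: every pair $(j,k)$ with $j\in T$ and $k\in\mathcal O_j$ is in bijection with a non-root node $k$ whose father $\bar k$ lies in $T$, so $k$ ranges over the disjoint union of $T\setminus\{\emptyset\}$ (using the hypothesis that $T$ is closed under taking fathers, so $k\in T$ forces $\bar k\in T$) and $\partial T$ (the set of off-$T$ nodes with father in $T$). Hence
\[
\sum_{j\in T}\sum_{k\in\mathcal O_j}2c_kv_j^2v_k
=\sum_{k\in T\setminus\{\emptyset\}}2c_kv_{\bar k}^2v_k+\sum_{k\in\partial T}2c_kv_{\bar k}^2v_k.
\]
The first of these sums is exactly the gain contribution over $T\setminus\{\emptyset\}$, so it cancels all but the $j=\emptyset$ term of $\sum_{j\in T}2c_jv_{\bar\jmath}^2v_j$.

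Finally I would handle the surviving root contribution: since $d_\emptyset=1$ and $|\emptyset|=0$ one has $c_\emptyset=1$, and by definition $v_{\bar\emptyset}\equiv f$, so the $j=\emptyset$ gain term is $2f^2v_\emptyset(t)$. Combining this with the uncancelled boundary loss terms delivers the claimed identity. There is no real obstacle here; the only care needed is to treat the root correctly and to verify that the closure hypothesis on $T$ really does force $k\in T\setminus\{\emptyset\}\Rightarrow\bar k\in T$, which is what makes the telescoping clean.
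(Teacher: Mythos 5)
Your proof is correct and takes essentially the same route as the paper's: differentiate term by term (legitimate since $T$ is finite), reindex the loss double sum via the bijection $(j,k)\mapsto k$ onto $\{k:\bar k\in T\}=(T\setminus\{\emptyset\})\cup\partial T$, cancel against the gain terms, and identify the surviving root term as $2f^2v_\emptyset(t)$ using $c_\emptyset=1$ and $v_{\bar\emptyset}\equiv f$. Nothing is missing.
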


\begin{proof}
Since $T$ is finite we can exchange derivative and sum,
\begin{align*}
\frac d{dt}\sum_{j\in T}v_j^2
&=\sum_{j\in T}2v_jv_j'
=\sum_{j\in T}2v_j\biggl(c_jv_{\bar\jmath}^2-\sum_{k\in O(j)}c_kv_jv_k\biggr)\\
&=\sum_{j\in T}2c_jv_{\bar\jmath}^2v_j-\sum_{k:\bar k\in T}2c_kv_j^2v_k.
\end{align*}
By the hypothesis on $T$ and the definition of $\partial T$, we have
\[
	\{k\in J:\bar k\in T\}\cup\{\emptyset\}=T\cup\partial T. 
\]
Since the
contribution of $\emptyset$ is $2f^2v_\emptyset(t)$, the proof is complete.
\end{proof}

\begin{remark}
The generality of the set $T$ in Proposition~\ref{p:energy_flow_terms}
allows us to give an interpretation of the term
$2c_ju_{\bar\jmath}^2u_j$ as the energy flow from $\bar\jmath$ to
$j$. During each unit of time this amount of energy enters the subtree
rooted in $j$ and distributes among all the subtree's nodes,
contributing to the wavelet components of the solution corresponding
to these nodes. Notice that these components are all supported inside
the cube $Q_j$, and we are considering the constant solution, so the
same amount of energy must be dissipated inside the cube $Q_j$. Thus
the quantity
\begin{equation}\label{e:Fj_def}
F_j
\coloneqq \frac{2c_ju_{\bar\jmath}^2u_j}{2c_\emptyset u_{\bar\emptyset}^2u_\emptyset}
=\frac{1}{2^qf^3}c_ju_{\bar\jmath}^2u_j
,\qquad j\in J,
\end{equation}
can be interpreted as the fraction of anomalous dissipation inside
cube $Q_j$.

Notice moreover that if $T$ is as in
Proposition~\ref{p:energy_flow_terms}, then the family
$(Q_j)_{j\in\partial T}$ forms a partition of $Q_\emptyset$ made of smaller
non-overlapping cubes. In this sense
Proposition~\ref{p:energy_flow_terms} states that for any such
partition of $Q_\emptyset$, the total energy dissipation of the system is the
sum of the anomalous dissipation of every cube of the partition, and
that this sum does not depend on the partition itself and it is always
equal to the energy entering the system from its root.
\end{remark}

The question arises now whether the anomalous dissipation is
distributed somewhat evenly among the cubes of a partition. If this
was the case, it would be more or less proportional to the volume of
the cubes and there would be a density of anomalous dissipation with
respect to the Lebesgue measure $\mathcal L$. This is not the case, as
the following statement clarifies.

\begin{prop}\label{p:rate_Fj}
Let $u$ be the constant solution of an RCM, and $(F_j)_{j\in J}$
defined as in~\eqref{e:Fj_def}. Let
\begin{equation}\label{e:calE_def}
\mathcal{R}(a)\coloneqq d+\frac32\ell_{3/2}-\frac32a
,\qquad a\in\mathbb R.
\end{equation}
Then the following holds:
\begin{enumerate}
\item Anomalous dissipation of energy in the cubes has an exponential rate in
  $|j|$ that can be computed explicitly:
\begin{equation}\label{e:1nlog2Fj}
\frac1{|j|}\log_2F_j=-\mathcal{R}(\sigma_j)
,\qquad j\in J, 
\end{equation}
where we define 
\begin{equation*}
\sigma_j\coloneqq\frac1{|j|}\sum_{k\leq j}\log_2d_k
,\qquad j\in J.
\end{equation*}
\item Introduce the pointwise rate of anomalous dissipation,
\[
\sigma(x)\coloneqq\lim_{n\to\infty}\sigma_{x_n},
\]
for all $x\in Q_\emptyset$ for which the limit exists. Then
$\sigma(x)=\ell_0$ for $\mathcal L$-a.e.~$x$.
\item  For all $x$ such that $\sigma(x)<\ell_{3/2}$,
\[
\lim_{n\to\infty}\frac{F_{x_n}}{\mathcal L(Q_{x_n})}=0.
\]
In particular if the model is not flat, then this holds for $\mathcal
L$-a.e.~$x$.
\end{enumerate}
\end{prop}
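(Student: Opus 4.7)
The plan is to prove the three parts in order, with part~(1) being the key computational step and parts~(2) and (3) following by standard probabilistic and elementary arguments.

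For part~(1) I intend a direct algebraic calculation. Substituting the explicit formula \eqref{e:solution_with_q} for both $u_j$ and $u_{\bar\jmath}$ into the definition \eqref{e:Fj_def} of $F_j$, and writing $c_j=d_j2^{\alpha|j|}$, the product along $\bar\jmath$ coming from $u_{\bar\jmath}^2$ combines with the product of square roots along $j$ coming from $u_j$ and with the extra factor $d_j$ from $c_j$ into a single clean product, yielding
\[
F_j = 2^{(\alpha+3q)|j|}\prod_{k\leq j}d_k^{3/2}.
\]
The identity $3q=-\alpha-d-\tfrac{3}{2}\ell_{3/2}$ from \eqref{eq:defq} simplifies the prefactor to $2^{-(d+\frac{3}{2}\ell_{3/2})|j|}$, and taking $\log_2$, dividing by $|j|$ and recognising $\sigma_j$ gives exactly \eqref{e:1nlog2Fj}.

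For part~(2) I appeal to the strong law of large numbers. Under $\mathcal{L}$ on $Q_\emptyset$ the ``digits'' that determine $x$ are i.i.d.~uniform on $\{1,\dots,N\}$, so in the RCM setting $(\log_2 d_{x_i})_{i\geq 1}$ is an i.i.d.~sequence uniformly distributed on $\{\log_2\delta_\omega:\omega\in\Omega\}$, with common mean $\tfrac{1}{N}\sum_\omega\log_2\delta_\omega=\ell_0$. Since $d_\emptyset=1$, one has $\sigma_{x_n}=\tfrac{1}{n}\sum_{i=1}^n\log_2 d_{x_i}$, and SLLN yields $\sigma(x)=\ell_0$ for $\mathcal{L}$-a.e.~$x$.

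Part~(3) is then a direct consequence of part~(1): because $\mathcal{L}(Q_{x_n})=2^{-dn}$,
\[
\frac{F_{x_n}}{\mathcal{L}(Q_{x_n})}
= 2^{n(d-\mathcal{R}(\sigma_{x_n}))}
= 2^{\frac{3n}{2}(\sigma_{x_n}-\ell_{3/2})}.
\]
If $\sigma(x)<\ell_{3/2}$, then for $n$ large enough the exponent is bounded above by $-\epsilon n$ for some $\epsilon>0$, and the limit vanishes. The final ``in particular'' claim combines this with part~(2) and with the fact, already established earlier in the paper, that $s\mapsto\ell_s$ is strictly increasing whenever the model is not flat; this gives $\ell_0<\ell_{3/2}$, and hence $\sigma(x)=\ell_0<\ell_{3/2}$ for $\mathcal{L}$-a.e.~$x$. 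The only point that requires any care is the telescoping in part~(1); the rest is either book-keeping or standard probability, so I do not expect any genuine obstacle.
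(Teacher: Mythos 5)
Your proposal is correct and follows essentially the same route as the paper's own proof: direct substitution of \eqref{e:solution_with_q} into \eqref{e:Fj_def} to get $F_j=2^{(\alpha+3q)|j|}\prod_{k\leq j}d_k^{3/2}$, the strong law of large numbers for the i.i.d.\ digits for part (2), and the identity $d-\mathcal R(a)=\tfrac32(a-\ell_{3/2})$ together with strict monotonicity of $\ell$ for part (3). No gaps.
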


\begin{proof}
By substituting the definition~\eqref{e:solution_with_q} inside~\eqref{e:Fj_def}, we get
\[
F_j=2^{\alpha|j|}2^{3q|j|}\prod_{k\leq j}d_k^{3/2}.
\]
We can now recall that, by~\eqref{eq:defq}, 
$q=-\frac13\alpha-\frac13d-\frac12\ell_{3/2}$, so that
\[
\frac1{|j|}\log_2F_j
=\alpha+3q+\frac32\frac1{|j|}\sum_{k\leq j}\log_2d_k
=-d-\frac32\ell_{3/2}+\frac32\sigma_j
=-\mathcal R(\sigma_j).
\]

For the second part, consider the probability space
$(Q_\emptyset,\mathcal B,\mathcal L)$.  The maps
$x\mapsto d_{x_i}$, for $i\in \N$ are random variables, and so is $\sigma_{x_n}$,
\[
\sigma_{x_n}=\frac1n\sum_{i=0}^n\log_2d_{x_i}.
\]
By the definition of RCM, for all $i\in\N$ the law of $d_{x_i}$
conditioned on $d_{x_{i-1}}$ is uniform on the set
$\{\delta_\omega\}_{\omega\in \Omega}$, hence the random process
$(d_{x_i})_{i\in \N}$ is a sequence of i.i.d.~random variables.  By
the strong law of large numbers,
\[
\sigma_{x_n}\to\frac1N\sum_\omega\log_2\delta_\omega
\qquad\mathcal L\text{-a.e. }x,\quad\text{as }n\to\infty.
\]
By the definition of $\ell_0$ this completes the second part. As for the
last part,
\[
\frac1n\log_2\frac{F_{x_n}}{\mathcal L(Q_{x_n})}
=\frac1n\log_2F_{x_n}+d, 
\]
and the right-hand side converges almost surely to $-\frac32(\ell_{3/2}-\sigma(x))$ as $n\to\infty$.

The hypothesis that the model is not flat ensures that
$\ell_{3/2}>\ell_0$.
\end{proof}

Proposition~\ref{p:rate_Fj} states, in the first point, that the
anomalous dissipation of cube $Q_j$ depends on $\sigma_j$. In
particular if the anomalous dissipation was evenly distributed, $F_j$
would be proportional to the volume $2^{-dn}$ and hence
by~\eqref{e:1nlog2Fj} the typical value of $\sigma_j$ would be
$\ell_{3/2}$. On the contrary, the second point in
Proposition~\ref{p:rate_Fj} states that the typical value is $\ell_0$
instead, which is lesser, and cannot account for a positive fraction
of the total anomalous dissipation (hence the 0 density limit).  This
means that anomalous dissipation is actually concentrated in few cubes
with much larger values of $\sigma_j$ and $F_j$.  This in turn
suggests that we are dealing with a fractal object, and in particular
that Lebesgue measure is not the right mathematical tool to get a
meaningful picture of this phenomenon.

\begin{remark}\label{rem:Onsager_local}
From a local point of view, Proposition~\ref{p:rate_Fj} further
clarifies that pointwise anomalous dissipation happens exactly at the
points $x$ such that $\sigma(x)\geq\ell_{3/2}$. This can be linked to
some sort of local H\"older exponent, in fact the description of the
spaces $C^s$ in terms of wavelet coefficients given in
Definition~\ref{d:function_spaces} suggests a pointwise refinement by
introducing the local H\"older exponent of $u$ at the point $x$ as
\[
s(x):=\sup\left\{s:\sup_{n\geq1}\biggl(ns+\frac12dn+\log_2|u_{x_n}|\biggr)<\infty\right\},
\]
or equivalently
\[
s(x)=\liminf_{n\to\infty}\left(-\frac d2-\frac1n\log_2|u_{x_n}|\right).
\]
Actually, there exists a different commonly accepted definition of
local H\"older exponent: our $s(x)$ is in principle a different
quantity (also found in the literature, often called the local
singularity exponent of wavelet coefficients and denoted by
$w(x)$). In many simple cases these two concepts are equivalent, but
not in general, as is shown in Muzy et
al.~\cite{muzy1993multifractal}. (We refer the reader to
Riedi~\cite{riedi1999multifractal} for more details.)

In the case of the constant solution of the RCM we get
\[
s(x)=\frac13\biggl(\alpha-\frac d2\biggr)+\frac12(\ell_{3/2}-\sigma(x)).
\]
Then for the physical case, when $\alpha=1+d/2$, we get that
$\sigma(x)\geq\ell_{3/2}$ if and only if $s(x)\leq\frac13$: there is
anomalous dissipation at a point $x$ if and only if $s(x)\leq\frac13$.
Notice that the ``only if'' part of this pointwise statement also holds for
incompressible Euler equations, as was first shown by Duchon and
Robert~\cite{ducrob00}.
\end{remark}

The following theorem, which could be restated in terms of a large
deviation principle for $\sigma_j$, identifies exactly the single
value of $\sigma_j$ which contributes to almost all the anomalous
dissipation.

It will be useful to introduce the following function:
\begin{equation}\label{e:phi_def}
\phi(\gamma)\coloneqq \sum_\omega\frac{\delta_\omega^\gamma}{\sum_v\delta_v^\gamma}\log_2\delta_\omega
,\qquad\gamma\in\mathbb R,
\end{equation}
which we notice satisfies
\begin{equation}\label{eq:derivative}
\frac {d}{d\gamma}(\gamma\ell_\gamma)
=\phi(\gamma)
,\qquad\gamma\in\mathbb R.
\end{equation}

\begin{thm}\label{t:AD_is_phi32}
For all sets $B\subset\mathbb R$ for which $\phi(3/2)$ is an internal
point,
\[
\lim_{n\to\infty}\sum_{|j|=n}F_j\mathbbm{1}_{\sigma_j\in B}=1.
\]
\end{thm}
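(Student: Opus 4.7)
The plan is to reinterpret the family $(F_j)_{|j|=n}$ as a probability distribution on the level-$n$ nodes and apply a weak law of large numbers.

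First, I would unfold the formula for $F_j$ derived in the proof of Proposition~\ref{p:rate_Fj}: using $q=-\frac13\alpha-\frac13d-\frac12\ell_{3/2}$, one gets, for $|j|=n$,
\[
F_j
=2^{(\alpha+3q)n}\prod_{k\leq j}d_k^{3/2}
=2^{-(d+\frac32\ell_{3/2})n}\prod_{k\leq j}d_k^{3/2}.
\]
Combining this with Lemma~\ref{lem:alg_identity} applied to $\varphi(x)=x^{3/2}$ gives $\sum_{|j|=n}F_j=1$, so the weights $(F_j)_{|j|=n}$ define a probability measure $\mu_n$ on $\{j\in J:|j|=n\}$, and the theorem becomes the assertion $\mu_n(\sigma_j\in B)\to1$.

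Next, I would observe that under $\mu_n$ the coordinates factorize. Indeed, if $X$ is a $\mu_n$-random node in generation $n$, then for any fixed sequence of offspring labels $(\omega_1,\dots,\omega_n)\in\Omega^n$,
\[
\mu_n\bigl(d_{X_1}=\delta_{\omega_1},\dots,d_{X_n}=\delta_{\omega_n}\bigr)
=\prod_{i=1}^n\frac{\delta_{\omega_i}^{3/2}}{\sum_{v\in\Omega}\delta_v^{3/2}},
\]
because the tree structure of RCM ensures that in each generation the labels $\{\delta_\omega\}_{\omega\in\Omega}$ occur exactly once among the $N$ children of any given node, so the product form of $F_j$ makes the marginals independent. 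Hence, under $\mu_n$, the variables $Y_i\coloneqq\log_2 d_{X_i}$, $i=1,\dots,n$, are i.i.d.\ with common law $\mathbb P(Y=\log_2\delta_\omega)=\delta_\omega^{3/2}/\sum_v\delta_v^{3/2}$, whose expectation is precisely
\[
\mathbb E Y
=\sum_{\omega\in\Omega}\frac{\delta_\omega^{3/2}}{\sum_v\delta_v^{3/2}}\log_2\delta_\omega
=\phi(3/2)
\]
by the definition~\eqref{e:phi_def}.

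Then I would apply the weak law of large numbers: since $\sigma_j=\frac1n\sum_{i=1}^nY_i$ (up to the harmless $k=0$ term) and the $Y_i$ are i.i.d.\ and bounded, for any $\epsilon>0$,
\[
\mu_n\bigl(|\sigma_j-\phi(3/2)|\geq\epsilon\bigr)\xrightarrow[n\to\infty]{}0.
\]
Since $\phi(3/2)$ is an internal point of $B$, some interval $(\phi(3/2)-\epsilon,\phi(3/2)+\epsilon)$ lies inside $B$, so $\mu_n(\sigma_j\notin B)\leq\mu_n(|\sigma_j-\phi(3/2)|\geq\epsilon)\to0$, which is exactly the claim.

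There is no genuine obstacle; the only subtlety is verifying the independence/factorization under $\mu_n$, which relies crucially on the RCM hypothesis (each child-set contains each $\delta_\omega$ exactly once). Once this is established, the weak LLN finishes the argument in one line.
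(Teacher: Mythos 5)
Your proof is correct, and it takes a genuinely different and more elementary route than the paper. You observe that, since $2^{d+\frac32\ell_{3/2}}=\sum_{\omega}\delta_\omega^{3/2}$, the weights $F_j=\prod_{k\leq j}\delta_{\cdot}^{3/2}/\sum_v\delta_v^{3/2}$ make $\mu_n$ an i.i.d.\ product measure on paths, tilted by the exponent $3/2$; the mean of $\log_2 d_{X_i}$ under the tilted law is $\phi(3/2)$ by definition~\eqref{e:phi_def}, and the weak law of large numbers concludes. (The only cosmetic caveat is that when several $\delta_\omega$ coincide your displayed factorization should be read at the level of values rather than labels, but the i.i.d.\ conclusion and the computation of $\mathbb E Y$ are unaffected.) The paper instead proves a quantitative large-deviation upper bound by the method of types: it groups nodes by their empirical frequency vector $\pi(j)$, bounds the multinomial count via Stirling by the entropy $H(p)$, maximizes $H$ under the linear constraint $\sigma(p)=a$ (Lemma~\ref{l:maxH_on_symplex}, giving $\mathcal D(a)$), and then invokes Lemma~\ref{l:calE_geq_calD} to show the rate $\mathcal R(a)-\mathcal D(a)$ is strictly positive away from $\phi(3/2)$. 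Your approach buys brevity and transparency (the tilting makes the appearance of $\phi(3/2)$ immediate, and Cram\'er's theorem would even recover the exponential rate if desired); the paper's approach buys an explicit rate function whose ingredient $\mathcal D(a)$ is reused verbatim in the subsequent Hausdorff-dimension theorem, which is presumably why the authors set it up that way.
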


\begin{remark}\label{rem:post_thm_AD_is_phi32}
Let $S_a\coloneqq\{x:\sigma(x)=a\}$. In non-rigorous terms,
Theorem~\ref{t:AD_is_phi32} states that the set $S_{\varphi(3/2)}$
accounts for all anomalous dissipation.  Notice that
$\varphi(3/2)\geq\ell_{3/2}$, as can be deduced by
equation~\eqref{e:calD_def} below, so considering $S_a$ for increasing
values of $a$, we get the picture that anomalous dissipation starts
when $a=\ell_{3/2}$ and increases in intensity with $a$. When
$a=\varphi(3/2)$ the tradeoff between intensity of anomalous
dissipation and Hausdorff dimension of the set $S_a$ balances out and
we may say that all anomalous dissipation happens in
$S_{\varphi(3/2)}$.
\end{remark}

To prove Theorem~\ref{t:AD_is_phi32} we will need a couple of
technical results.

\begin{lem}\label{l:maxH_on_symplex}
Let $S$ be the canonical simplex of $\mathbb R^\Omega$,
\[
	S\coloneqq \{p\in\mathbb R_+^\Omega:\sum_{\omega\in\Omega}p_\omega=1\}.
\]
Let $H$ be the entropy and $\sigma$ a linear function on $S$,
\[
H(p)\coloneqq -\sum_{\omega\in\Omega}p_\omega\log_2p_\omega,
\qquad\qquad
\sigma(p)\coloneqq \sum_{\omega\in\Omega}p_\omega\log_2\delta_\omega.
\]
Suppose $\ell_{-\infty}\neq\ell_\infty$, then the map $\phi$ defined by
equation~\eqref{e:phi_def} is a strictly increasing bijection from
$\mathbb R$ to $(\ell_{-\infty},\ell_\infty)$. For all
$a\in(\ell_{-\infty},\ell_\infty)$ let $\gamma_a\coloneqq \phi^{-1}(a)$.
Then the maximum value of $H$ on $S$ subject to the constraint
$\sigma(p)=a$ is
\begin{equation}\label{e:calD_def}
	\mathcal D(a)
	\coloneqq d-\gamma_a(a-\ell_{\gamma_a})
	=\max_{\substack{p\in S\\\sigma(p)=a}}H(p)
        \leq d.
\end{equation}

Otherwise, if  $\ell_{-\infty}=\ell_\infty=:l$, then $\sigma(p)\equiv l$ is constant and
\[
\max_{\substack{p\in S\\\sigma(p)=l}}H(p)
=\max_{p\in S}H(p)
=d
\eqqcolon\mathcal D(l).
\]
\end{lem}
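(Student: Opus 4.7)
The plan is to solve the constrained optimization by Lagrange multipliers and recognize the resulting extremum as a Gibbs-type distribution of the form $p_\omega^\gamma \coloneqq \delta_\omega^\gamma/\sum_v \delta_v^\gamma$. First, assume $\ell_{-\infty} < \ell_\infty$ (the degenerate case is trivial and handled at the end). For $a \in (\ell_{-\infty}, \ell_\infty)$ the feasible set $\{p \in S : \sigma(p) = a\}$ is a nonempty compact convex polytope, and $H$ is strictly concave, so the maximum exists and is unique. I would locate it via the Lagrangian
\[
L(p) = -\sum_\omega p_\omega \log_2 p_\omega - \lambda\Bigl(\sum_\omega p_\omega - 1\Bigr) - \mu\Bigl(\sum_\omega p_\omega \log_2 \delta_\omega - a\Bigr),
\]
whose stationarity condition $\partial L/\partial p_\omega = 0$ gives $p_\omega \propto \delta_\omega^{-\mu/\ln 2}$. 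Writing $\gamma = -\mu/\ln 2$, the critical point is the Gibbs distribution $p^\gamma$, which is automatically interior (all components strictly positive), so no boundary complications arise.

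Second, I would fix the parameter $\gamma$ via the constraint. Substituting $p^\gamma$ into $\sigma$ yields $\sigma(p^\gamma) = \phi(\gamma)$ by the very definition~\eqref{e:phi_def}, so the correct choice is $\gamma = \gamma_a \coloneqq \phi^{-1}(a)$, provided $\phi$ is a bijection from $\R$ onto $(\ell_{-\infty}, \ell_\infty)$. This is the key structural claim. I would prove it by noting that $\gamma\ell_\gamma = \log_2\bigl(\frac1N \sum_\omega \delta_\omega^\gamma\bigr)$ is, up to constants, a logarithmic moment-generating function and hence strictly convex in $\gamma$ whenever the $\delta_\omega$ are not all equal; therefore its derivative $\phi$ (see~\eqref{eq:derivative}) is strictly increasing. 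The boundary behaviour follows by checking that as $\gamma\to+\infty$ the weights $p^\gamma$ concentrate on the set of indices achieving $\max_v \delta_v$, so $\phi(\gamma) \to \ell_\infty$, and symmetrically at $-\infty$.

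Third, I would compute the value of $H$ at $p^{\gamma_a}$:
\[
H(p^{\gamma_a}) = -\sum_\omega p_\omega^{\gamma_a}\Bigl[\gamma_a \log_2 \delta_\omega - \log_2 \sum_v \delta_v^{\gamma_a}\Bigr] = -\gamma_a\, \sigma(p^{\gamma_a}) + \log_2 \sum_v \delta_v^{\gamma_a}.
\]
Since $\sigma(p^{\gamma_a}) = a$ and $\sum_v \delta_v^{\gamma_a} = N \cdot 2^{\gamma_a \ell_{\gamma_a}}$ with $N = 2^d$, the right-hand side simplifies to $d - \gamma_a(a - \ell_{\gamma_a}) = \mathcal D(a)$. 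The bound $\mathcal D(a) \leq d$ is the unconstrained estimate $H(p) \leq \log_2 N = d$ applied to any $p \in S$. Strict concavity of $H$ on the affine constraint set confirms that this interior critical point is the unique global maximum.

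Finally, the degenerate case $\ell_{-\infty} = \ell_\infty = l$ forces all $\log_2 \delta_\omega = l$, so $\sigma \equiv l$ on $S$ and the constraint is vacuous; the maximum of $H$ is attained at the uniform distribution and equals $\log_2 N = d$, as asserted. The only nontrivial step is the bijectivity and monotonicity of $\phi$; once this is in place the rest is routine variational calculus on a simplex.
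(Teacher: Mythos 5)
Your proposal is correct and follows essentially the same route as the paper: Lagrange multipliers on the simplex yielding the Gibbs distribution $p_\omega\propto\delta_\omega^\gamma$, strict monotonicity of $\phi$ via the variance (second derivative of the log-moment-generating function $\gamma\mapsto\gamma\ell_\gamma$), inversion of $a=\phi(\gamma)$, and the entropy computation giving $d-\gamma_a(a-\ell_{\gamma_a})$, with concavity of $H$ certifying the global maximum. Your explicit check of the boundary limits $\phi(\gamma)\to\ell_{\pm\infty}$ and of the interiority of the critical point fills in details the paper leaves implicit, but the argument is the same.
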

\begin{remark}
	Notice that $\mathcal D$ is defined differently in the two cases, but
the two definitions are at least compatible, in the sense that in
both cases $\mathcal D(\ell_0)=d$.
\end{remark}

\begin{figure}[ht!]
	\centering
	\includegraphics[width=0.995\textwidth]{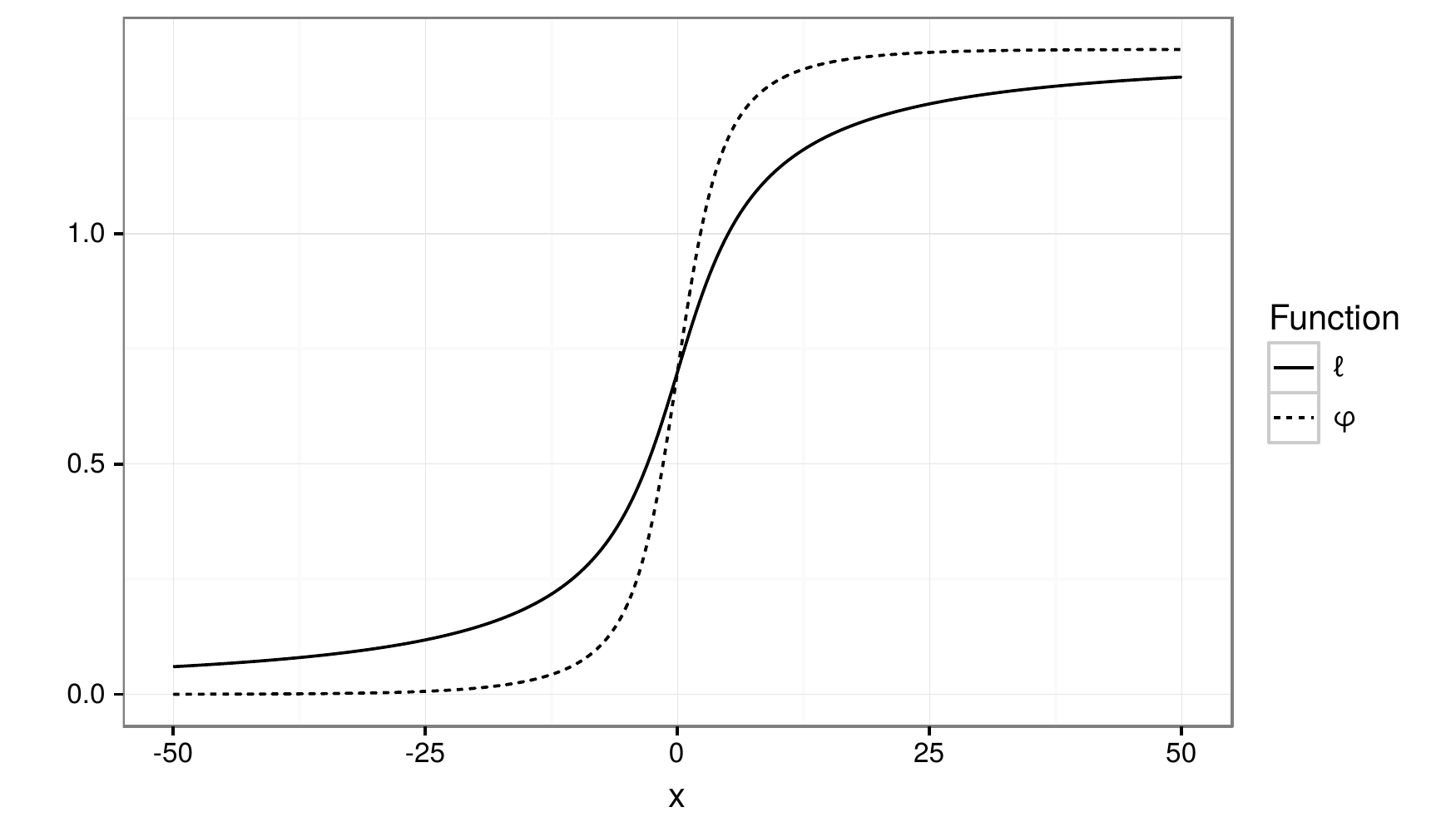}
	\caption{Comparison of the functions $\ell$ and $\varphi$ for a given choice of coefficients $(\delta_\omega)_{\omega\in\Omega}$.}
	%\label{fig:elphi}
\end{figure}

\begin{proof}[Proof of Lemma~\ref{l:maxH_on_symplex}]
If $\ell_{-\infty}=\ell_\infty$, then the model is flat, the
$\delta_\omega$'s are all equal to $\delta=2^l$ and the constraint
$\sigma(p)=l$ becomes trivially true. In that case $\mathcal D(a)$ is
defined only for $a=l$ and equal to $d$, which is exactly the maximum
of entropy under the single constraint of satisfying the simplex
equation.

From now on we will suppose that the model is not flat. By the method
of Lagrange multipliers applied to $H$ with two constraints given by
$\sigma(p)=a$ and the simplex equation, we can immediately get that
for any stationary point~$\widehat p$,
\[
\widehat p_\omega
=c\delta_\omega^\gamma, 
\]
for suitable constants $c$ and $\gamma$. From the simplex condition we have
$c^{-1}=\sum_\omega\delta_\omega^\gamma$. From the other constraint we obtain
\[
a
=\sigma(\widehat p)
=\sum_{\omega\in\Omega}\widehat p_\omega\log_2\delta_\omega
=\sum_{\omega\in\Omega}\frac{\delta_\omega^\gamma}{\sum_v\delta_v^\gamma}\log_2\delta_\omega
=\phi(\gamma).
\]
The derivative of $\phi$ is non-negative, since it can be expressed as
the variance of a discrete random variable:
\[
\phi'(\gamma)
=\sum_\omega\frac{\delta_\omega^\gamma}{\sum_v\delta_v^\gamma}(\log_2\delta_\omega)^2-\biggl(\sum_\omega\frac{\delta_\omega^\gamma}{\sum_v\delta_v^\gamma}\log_2\delta_\omega\biggr)^2
\geq0.
\]
In particular $\phi'(\gamma)\neq0$ since $\delta_\omega$ are not all
equal and hence $\phi$ is a bijection from $\mathbb R$ to
$(\ell_{-\infty},\ell_\infty)$.

We can thus invert $a=\phi(\gamma)$, find
$\gamma=\gamma_a=\phi^{-1}(a)$ and compute $H(\hat p)$
\begin{equation*}
\begin{split}
	H(\widehat p) &=-\sum_{\omega\in\Omega}\frac{\delta_\omega^\gamma}{\sum_v\delta_v^\gamma}\log_2\frac{\delta_\omega^\gamma}{\sum_v\delta_v^\gamma}\\ 
	&=\log_2\sum_v\delta_v^\gamma-\sum_{\omega\in\Omega}\frac{\delta_\omega^\gamma}{\sum_v\delta_v^\gamma}\log_2\delta_\omega^\gamma
	=d+\gamma\ell_\gamma-\gamma\phi(\gamma)
	=\mathcal D(a).
\end{split}
\end{equation*}
To conclude it is enough to notice that $H$ is concave, since its
Hessian matrix is diagonal negative definite.
\end{proof}

\begin{lem}\label{l:calE_geq_calD}
Consider the functions $\mathcal{R}$ and $\mathcal D$ as defined in
equations~\eqref{e:calE_def} and~\eqref{e:calD_def}. The following inequality holds: 
\[
\mathcal{R}(a)\geq\mathcal D(a)
,\qquad a\in(\ell_{-\infty},\ell_\infty), 
\]
with equality if and only if $a=\phi(3/2)$.
\end{lem}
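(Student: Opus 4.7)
The plan is to reduce the inequality to a statement about the convexity of the function $F(\gamma)\coloneqq\gamma\ell_\gamma$. First I would substitute the definitions of $\mathcal R$ and $\mathcal D$ and rearrange, so that
\[
\mathcal R(a)-\mathcal D(a)
=\tfrac32\ell_{3/2}-\gamma_a\ell_{\gamma_a}+(\gamma_a-\tfrac32)a.
\]
The key point is that since $\gamma_a=\phi^{-1}(a)$, we have $a=\phi(\gamma_a)$, and by the relation~\eqref{eq:derivative} we also know $\phi=F'$. Substituting $a=F'(\gamma_a)$ into the above turns the right-hand side into
\[
\mathcal R(a)-\mathcal D(a)
=F(\tfrac32)-F(\gamma_a)-\bigl(\tfrac32-\gamma_a\bigr)F'(\gamma_a),
\]
which is exactly the gap in the tangent-line inequality for $F$ at the point $\gamma_a$, evaluated at $3/2$.

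Next I would invoke convexity of $F$. The formula for $\phi'$ computed in the proof of Lemma~\ref{l:maxH_on_symplex} expresses $\phi'(\gamma)=F''(\gamma)$ as the variance of $\log_2\delta_\omega$ under the probability weights $\delta_\omega^\gamma/\sum_v\delta_v^\gamma$. Hence $F''\geq0$, so $F$ is convex, and the tangent-line inequality gives $F(\tfrac32)\geq F(\gamma_a)+F'(\gamma_a)(\tfrac32-\gamma_a)$, i.e.\ $\mathcal R(a)\geq\mathcal D(a)$, as required. Moreover, since $a\in(\ell_{-\infty},\ell_\infty)$ presupposes we are in the non-flat case, the same computation shows the variance is strictly positive, so $F$ is strictly convex; strict convexity yields equality in the tangent-line bound exactly when $\gamma_a=3/2$, equivalently $a=\phi(3/2)$.

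I do not expect a genuine obstacle here: both $\mathcal R$ and $\mathcal D$ are already expressed in terms of $\ell_\gamma$ and $\gamma_a$, and the identification with a convexity gap is essentially an algebraic rearrangement once~\eqref{eq:derivative} is invoked. The only care needed is in the equality case, where I would emphasize that strict convexity of $F$ is available precisely because of the hypothesis $\ell_{-\infty}\neq\ell_\infty$ built into the use of $\gamma_a$.
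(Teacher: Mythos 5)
Your proposal is correct and is essentially the paper's own argument: the paper also reduces the inequality to the identity $\frac{d}{ds}(s\ell_s)=\phi(s)$ and the strict positivity of $\phi'$, writing the gap via a second-order Taylor expansion of $s\mapsto s\ell_s$ at $\gamma_a$ evaluated at $s=3/2$, which is exactly your tangent-line inequality for the strictly convex function $F(\gamma)=\gamma\ell_\gamma$.
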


\begin{proof}
Let us consider the difference $\mathcal{R}(\phi(\gamma))-\mathcal
D(\phi(\gamma))$ as a function of $\gamma$. We want to prove that
\[
\frac{3}{2}\bigl(\ell_{3/2}-\phi(\gamma)\bigr)
-\gamma\bigl(\ell_\gamma-\phi(\gamma)\bigr)
\geq0,
\]
with equality if and only if $\gamma=\frac32$. The \emph{if} part of the equality case
is obvious, while the strict inequality for $\gamma\neq\frac{3}{2}$ comes
by Taylor formula for the function $s\mapsto s\ell_s$ in~$\gamma$. 

As we noticed in~\eqref{eq:derivative}, we have for all $s$,
\[
\frac {d}{ds}(s\ell_s)
=\phi(s),
\]
so we can write, for a suitable $\xi=\xi(s)\in(\gamma,s)$, 
\[
s\ell_s=\gamma\ell_\gamma+(s-\gamma)\phi(\gamma)+\frac12(s-\gamma)^2\phi'(\xi).
\]
We proved in
Lemma~\ref{l:maxH_on_symplex} that $\phi$ is strictly increasing, so
we get
\[
s(\ell_s-\phi(\gamma))-\gamma(\ell_\gamma-\phi(\gamma))>0, 
\]
for all $s\neq\gamma$.
\end{proof}

We can now proceed with the proof of the theorem.
\begin{proof}[Proof of Theorem~\ref{t:AD_is_phi32}]
  Let $n\geq1$. Since $\sum_{|j|=n}F_j=1$ by the definition of $F_j$,
  the following defines a discrete probability measure on $\R$:
\[
\mu_n\coloneqq \sum_{|j|=n}F_j\delta_{\sigma_j}.
\]
Let $A$ be the complement of $B$ in $\R$.  Having the result of
Lemma~\ref{l:calE_geq_calD} in mind, we will show that
\begin{equation}\label{e:claim_limsup_rate_Tn}
	\limsup_{n\to\infty}\frac1n\log_2\mu_n(A)
	\leq-\inf_{a\in A}\bigl[\mathcal{R}(a)-\mathcal D(a)\bigr].
\end{equation}
Assuming this to hold, by Lemma~\ref{l:calE_geq_calD} and the hypothesis on $B$, namely that $\phi(3/2)$ is an internal point, 
we will get
\[
\inf_{a\in A}\bigl[\mathcal{R}(a)-\mathcal D(a)\bigr]\eqqcolon \lambda>0, 
\]
and hence
\[
\mu_n(B)
=1-\mu_n(A)
\geq1-2^{-\lambda' n},
\]
for $n$ large and a suitable $\lambda'>0$, yielding the desired
conclusion that $\mu_n(B)\to1$ as $n\to\infty$.

To prove~\eqref{e:claim_limsup_rate_Tn}, we use
Proposition~\ref{p:rate_Fj} to rewrite $\mu_n(A)$ in terms of the
$\sigma_j$'s as
\[
	\mu_n(A)\coloneqq \sum_{|j|=n}2^{n\bigl(-d-\frac32\ell_{3/2}+\frac32\sigma_j\bigr)}\delta_{\sigma_j}(A).
\]
Notice that $\sigma_j=\sigma_{j'}$ if $j$ and $j'$ have the same
generation and the $d_k$'s appear the same number of times but in
different order in the definition of $\sigma_j$. This suggests the
change of variables $p=\pi(j)$, where $\pi:J\to\mathbb R^\Omega$ is
defined by
\[
	\pi_\omega(j)
	\coloneqq \frac1{|j|}\sharp\{k\leq j:d_k=\delta_\omega\}
	,\qquad\omega\in\Omega,\quad j\in J.
\]
In fact $\sigma_j$ depends only on $\pi(j)$, and indeed we can write
$\sigma_j=\sigma(\pi(j))$, with $\sigma:\mathbb R^\Omega\to\mathbb R$
defined by
\[
	\sigma(p)
	\coloneqq \sum_{\omega\in\Omega}p_\omega\log_2\delta_\omega
	,\qquad p\in\mathbb R^\Omega.
\]
Now we can rewrite $\mu_n(A)$, with the change of variable $p=\pi(j)$, as
\[
	\mu_n(A)
	%=\sum_{\substack{|j|=n\\\sigma_j\in A}}2^{n(-d-\frac32\ell_{3/2}+\frac32\sigma_j)}
	=\sum_{p\in S_n}\mathbbm1_{\sigma(p)\in A}2^{n(-d-\frac32\ell_{3/2}+\frac32\sigma(p))}c_n(p),
\]
where 
\[
	c_n(p)=\sharp\{j\in J:|j|=n,\pi(j)=p\},
\]
and $S_n$ is the $\frac1n$-lattice inside the canonical symplex
of $\mathbb R^\Omega$
\begin{equation*}
\begin{split}
	S_n
	&=\pi(\{j\in J:|j|=n\})\\
	&=\{p\in\mathbb R^\Omega:\sum_{v\in\Omega}p_v=1,\textup{ and for all }\omega\in\Omega, p_w\geq0, p_wn\in \Z\}.
\end{split}
\end{equation*}

We want an upper bound for $\mu_n(A)$. The factor $c_n(p)$ can be
computed exactly, as it is easy to see that
\[
	c_n(p)
	%=\sharp\{j\in J:|j|=n,\pi(j)=p\}
	=\binom n{p_1n\ p_2n\ \dots\ p_Nn},
\]
and this multinomial can be bounded by one version\footnote{The usual
  Stirling's approximation states that $n!n^{-n-1/2}e^n\to\sqrt{2\pi}$
  as $n\to\infty$. One can also prove that
  $n!n^{-n-1/2}e^n\in[\sqrt{2\pi},e]$ for all $n$.} of Stirling's
approximation, yielding
\[
	\frac1n\log_2c_n(p)
	\leq\frac1n\log_2\left(\frac{n^{(1-N)/2}e}{(2\pi)^{N/2}}\prod_\omega p_\omega^{-p_\omega n+1/2}\right)
	\leq H(p)+C\frac{\log_2n}n,
\]
where $H$ denotes the entropy, defined as 
\[
	H(p)=-\sum_\omega p_\omega\log_2p_\omega,
\] 
and
the constant $C$ does not depend on $p$ or $n$.

The sum over $S_n$ is then bounded by the cardinality $\sharp S_n$
times the supremum of the summand in $p$. We have
\[
	\sharp S_n
	=\binom{n+N-1}{N-1}\leq n^N,
\]
hence $\frac1n\log_2(\sharp S_n) \leq N\frac{\log_2n}n$ and we get
\begin{equation}\label{e:ineqTn}
	\frac1n\log_2\mu_n(A)
	\leq (N+C)\frac{\log_2n}n+\sup_{p\in\sigma^{-1}(A)}\left(-d-\frac32\ell_{3/2}+\frac32\sigma(p)+H(p)\right).
\end{equation}
By Lemma~\ref{l:maxH_on_symplex},
$\sup_{p\in\sigma^{-1}(a)}H(p)=\mathcal D(a)$, so taking the limsup in\eqref{e:ineqTn},
we get
\[
\limsup_{n\to\infty}\frac1n\log_2\mu_n(A)
\leq\sup_{a\in A}\left(-d-\frac32\ell_{3/2}+\frac32a+\mathcal D(a)\right),
\]
which is a rewriting of~\eqref{e:claim_limsup_rate_Tn}.
\end{proof}

Finally, we deal with the Hausdorff dimension of the set of points
that accounts for all anomalous dissipation. We will need to be more
precise than we were in Remark~\ref{rem:post_thm_AD_is_phi32}. There
we defined $S_a\coloneqq\{x:\exists\lim_n\sigma_{x_n}=a\}$. This will
be now refined to $E(\mathbb{S}_a)$, the set of $x$ for which all the
points of accumulation of the relative densities of the
$\delta_\omega$ appearing in the sequence $d_{x_n}$ correspond to
$\sigma=a$.  This notation allows us to use a theorem in
Olsen~\cite{olsen2004hausdorff} to compute the Hausdorff dimension of
$E(\mathbb{S}_a)$.

\medskip  Consider once more the notation introduced in the proof of
Theorem~\ref{t:AD_is_phi32}: the maps $\pi:J\to\mathbb R^\Omega$,
\[
\pi_\omega(j)
= \frac1{|j|}\sharp\{k\leq j:d_k=\delta_\omega\}
,\qquad\omega\in\Omega,\quad j\in J,
\]
 and $\sigma:\mathbb R^\Omega\to\mathbb R$,
\[
\sigma(p)
= \sum_{\omega\in\Omega}p_\omega\log_2\delta_\omega
,\qquad p\in\mathbb R^\Omega.
\]
Consider moreover for $x\in Q_\emptyset$ the set of points of accumulation
of the (vectorial) frequencies of the coefficients
$(\delta_\omega)_{\omega \in \Omega}$ in the dyadic expansion in~$x$:
\[
A(x)
\coloneqq \acc\Bigl[(\pi(x_n))_{n\geq 0}\Bigr]
\subseteq S.
\]
Let us also define
\[
	\mathbb{S}_a\coloneqq \{p\in S, \sigma(p)=a\},
\]
and finally
\[
  E(\mathbb{S}_a)\coloneqq \{x\in Q_\emptyset: A(x)\subseteq\mathbb{S}_a \},
\]
the set of all points $x$ in the cube $Q_\emptyset$ such that the asymptotic frequencies of the $(\delta_\omega)_{\omega\in\Omega}$ associated to~$x$ are in $\mathbb{S}_a$.

With the notation introduced above, the following theorem was proved by Olsen
(see~\cite{olsen2004hausdorff})
\begin{thm}[Olsen]
	The Hausdorff dimension of $E(\mathbb{S}_a)$ is:
	\begin{equation*}
	\dim E(\mathbb{S}_a)= \sup_{p\in \mathbb{S}_a}H(p).
	\end{equation*}
\end{thm}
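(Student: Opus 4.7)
The plan is the classical two-sided strategy for Eggleston--Besicovitch type theorems, adapted to accumulation sets. I would prove $\dim E(\mathbb{S}_a)\leq \sup_{p\in\mathbb{S}_a}H(p)$ by a covering argument and $\dim E(\mathbb{S}_a)\geq \sup_{p\in\mathbb{S}_a}H(p)$ by a mass distribution / Bernoulli measure argument, both exploiting the fact that the RCM identifies the symbolic space $\{1,\dots,N\}^\N$ with the dyadic subdivisions of $Q_\emptyset$ in such a way that the diameter of the cylinder $Q_j$ is $\sqrt{d}\,2^{-|j|}$.

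For the lower bound, by continuity of $H$ and compactness of $\mathbb{S}_a$, pick $p^*\in\mathbb{S}_a$ with $H(p^*)=\sup_{\mathbb{S}_a}H$. Since $|\Omega|=N$, there is a canonical bijection between labels $k\in\{1,\dots,N\}$ and $\Omega$ via $k\mapsto\omega(k)$ where $\delta_{\omega(k)}$ is the coefficient attached to label $k$. Put on $\{1,\dots,N\}^\N$ the i.i.d.~product measure with marginal $\lambda_k:=p^*_{\omega(k)}$, and let $\mu$ be the induced measure on $Q_\emptyset$. By the strong law of large numbers, $\pi(x_n)\to p^*$ for $\mu$-a.e.~$x$, so $A(x)=\{p^*\}\subseteq\mathbb{S}_a$ and $\mu(E(\mathbb{S}_a))=1$. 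Moreover $-\tfrac1n\log_2\mu(Q_{x_n})=-\tfrac1n\sum_{i=1}^n\log_2\lambda_{x_n^i}\to-\sum_\omega p^*_\omega\log_2 p^*_\omega = H(p^*)$ $\mu$-a.s., so the local dimension of $\mu$ at $\mu$-a.e.~$x\in E(\mathbb{S}_a)$ equals $H(p^*)$, and the mass distribution principle (or Billingsley's lemma) yields $\dim_H E(\mathbb{S}_a)\geq H(p^*)$.

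For the upper bound, fix $\varepsilon>0$ and let $U_\varepsilon$ be the open $\varepsilon$-neighbourhood of $\mathbb{S}_a$ inside $S$. Since any $x\in E(\mathbb{S}_a)$ has $\pi(x_n)\in U_\varepsilon$ for all $n$ large enough, $E(\mathbb{S}_a)\subseteq\bigcup_{m}\bigcap_{n\geq m}\{x:\pi(x_n)\in U_\varepsilon\}$ and for each $m$ the family $\{Q_j:|j|=n,\,\pi(j)\in U_\varepsilon\}$ (for any $n\geq m$) is a cover. The cardinality of this family equals $\sum_{p\in S_n\cap U_\varepsilon}\binom{n}{p_1n\,\cdots\,p_Nn}$, which by Stirling (exactly as in the proof of Theorem~\ref{t:AD_is_phi32}) is bounded by $n^{N}\cdot 2^{n(\sup_{p\in U_\varepsilon}H(p)+o(1))}$. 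For any $s>\sup_{p\in U_\varepsilon}H(p)$, the corresponding $s$-Hausdorff sum $\sum(\sqrt{d}\,2^{-n})^{s}\cdot n^N\,2^{n(\sup_{U_\varepsilon}H+o(1))}$ tends to $0$, so $\dim_H E(\mathbb{S}_a)\leq\sup_{p\in U_\varepsilon}H(p)$; letting $\varepsilon\to 0$ and using uniform continuity of $H$ on the compact simplex finishes the argument.

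The main technical obstacle is the upper bound: the defining condition for $E(\mathbb{S}_a)$ controls the accumulation set $A(x)$, not individual $\pi(x_n)$, so the covering has to be chosen after passing to a subsequence (the $\bigcup_m\bigcap_{n\geq m}$ above) and one has to verify that the Stirling estimate survives this maneuver, which it does because the outer union over $m$ is countable and Hausdorff measure is countably subadditive. The rest is bookkeeping: continuity of $H$ and compactness of $S$ allow interchanging $\sup_{U_\varepsilon}H\to\sup_{\mathbb{S}_a}H$ in the limit $\varepsilon\to0$, and the optimization problem itself has already been solved explicitly in Lemma~\ref{l:maxH_on_symplex}.
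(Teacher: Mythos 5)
You should first note that the paper does not prove this statement at all: it is imported verbatim from Olsen~\cite{olsen2004hausdorff}, so there is no internal proof to compare against. Your self-contained argument is correct and is the classical Besicovitch--Eggleston strategy, which is what Olsen's (much more general) machinery for sets of the form $\{x: A(x)\subseteq C\}$ specializes to when $C=\mathbb{S}_a$ is a closed convex slice of the simplex. The lower bound is clean: $\mathbb{S}_a$ is compact and $H$ is continuous on $S$ (with the convention $0\log_2 0=0$), so the supremum is attained at some $p^*$; the Bernoulli measure with marginals $p^*$ charges $E(\mathbb{S}_a)$ with full mass by the strong law, and Billingsley's lemma in its dyadic-cube form gives $\dim E(\mathbb{S}_a)\geq H(p^*)$. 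The upper bound correctly isolates the one genuine subtlety, namely that membership in $E(\mathbb{S}_a)$ only forces $\pi(x_n)$ to enter $U_\varepsilon$ \emph{eventually} (this follows from compactness of $S\setminus U_\varepsilon$: a subsequence avoiding $U_\varepsilon$ would produce an accumulation point of $(\pi(x_n))_n$ outside $\mathbb{S}_a$ --- a point worth spelling out); the decomposition $E(\mathbb{S}_a)\subseteq\bigcup_m\bigcap_{n\geq m}\{x:\pi(x_n)\in U_\varepsilon\}$ together with countable stability of Hausdorff dimension then reduces everything to the multinomial count already carried out in the proof of Theorem~\ref{t:AD_is_phi32}, and $\sup_{U_\varepsilon\cap S}H\to\sup_{\mathbb{S}_a}H$ as $\varepsilon\to0$ by continuity of $H$ and closedness of $\mathbb{S}_a$. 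What your direct proof buys is a fully elementary, self-contained treatment of exactly the case needed here; what the citation to Olsen buys the authors is the general statement for arbitrary closed subsets of the simplex (and packing dimensions as well) without redoing the combinatorics.
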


Thanks to Lemma~\ref{l:maxH_on_symplex}, we can compute this dimension
for all $a$, and in particular, by Theorem~\ref{t:AD_is_phi32} we
obtain the following statement.
\begin{thm}%\label{thm:final}
	For all $a\in [\ell_{-\infty}, \ell_{+\infty}]$, the Hausdorff dimension of the set $E(\mathbb{S}_a)$ is  $\mathcal{D}(a)$. In particular the Hausdorff dimension of the set of the points $x$ where anomalous dissipation occurs is 
	\[
		\Delta=d-\frac{3}{2}\bigl(\phi(3/2)-\ell_{3/2}\bigr).
	\]
\end{thm}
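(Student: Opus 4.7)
The plan is to reduce the statement to two ingredients already available: Olsen's theorem, quoted just before the statement, and the variational formula for the maximum entropy on a simplex slice proved in Lemma~\ref{l:maxH_on_symplex}. The first assertion is essentially a one-line combination of the two, and the second amounts to evaluating $\mathcal{D}$ at the distinguished point $\phi(3/2)$ after invoking Theorem~\ref{t:AD_is_phi32} to identify which level set of $\sigma$ truly carries the anomalous dissipation.

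Concretely, Olsen's theorem gives
\[
\dim E(\mathbb{S}_a)=\sup_{p\in \mathbb{S}_a}H(p)
=\sup_{\substack{p\in S\\\sigma(p)=a}}H(p),
\]
and this is exactly the quantity $\mathcal{D}(a)$ introduced in~\eqref{e:calD_def}. Hence the first claim is immediate, both in the non-degenerate case $\ell_{-\infty}<\ell_\infty$, where Lemma~\ref{l:maxH_on_symplex} gives the explicit form $\mathcal{D}(a)=d-\gamma_a(a-\ell_{\gamma_a})$ with $\gamma_a=\phi^{-1}(a)$, and in the flat case $\ell_{-\infty}=\ell_\infty$ where there is a single admissible $a=\ell_0$ and $\mathcal{D}(\ell_0)=d$.

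For the ``in particular'' part, Theorem~\ref{t:AD_is_phi32} tells us that $\sum_{|j|=n}F_j\mathbbm{1}_{\sigma_j\in B}\to 1$ for every open neighbourhood $B$ of $\phi(3/2)$, so asymptotic anomalous dissipation is supported on points $x$ whose $\pi(x_n)$ have all their accumulation points on the level set $\sigma=\phi(3/2)$; that is, exactly on the set $E(\mathbb{S}_{\phi(3/2)})$ defined just before Olsen's theorem. Applying the first part with $a=\phi(3/2)$ and noting that $\gamma_{\phi(3/2)}=\phi^{-1}(\phi(3/2))=3/2$, the formula for $\mathcal{D}$ collapses to
\[
\Delta=\mathcal{D}(\phi(3/2))
=d-\tfrac{3}{2}\bigl(\phi(3/2)-\ell_{3/2}\bigr),
\]
as required. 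The fact that $\Delta<d$ whenever the model is not flat is a direct consequence of $\phi(3/2)>\ell_{3/2}$, which was already observed in Remark~\ref{rem:post_thm_AD_is_phi32} and is a restatement of the strict inequality in Lemma~\ref{l:calE_geq_calD}.

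The only conceptually delicate step is the identification of ``the set of points where anomalous dissipation occurs'' with $E(\mathbb{S}_{\phi(3/2)})$: Theorem~\ref{t:AD_is_phi32} is a statement about the measures $\mu_n=\sum_{|j|=n}F_j\delta_{\sigma_j}$, not about a geometrically defined subset of $Q_\emptyset$. I expect the cleanest way to bridge this gap is to observe that if $x\notin E(\mathbb{S}_{\phi(3/2)})$, then there exists a neighbourhood $B$ of $\phi(3/2)$ and infinitely many $n$ with $\sigma_{x_n}\notin B$, so $x$ cannot lie in the eventual support of the dissipation mass identified by Theorem~\ref{t:AD_is_phi32}; conversely $E(\mathbb{S}_{\phi(3/2)})$ does capture all such asymptotic support. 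Everything else is the mechanical substitution described above.
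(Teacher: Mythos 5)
Your proposal is correct and follows exactly the route the paper takes: the paper's (very terse) justification is precisely the combination of Olsen's theorem with Lemma~\ref{l:maxH_on_symplex} to get $\dim E(\mathbb{S}_a)=\mathcal D(a)$, followed by Theorem~\ref{t:AD_is_phi32} to single out $a=\phi(3/2)$ and the substitution $\gamma_{\phi(3/2)}=3/2$. Your remark that the identification of ``the set where anomalous dissipation occurs'' with $E(\mathbb{S}_{\phi(3/2)})$ is the only informal step is also faithful to the paper, which acknowledges this in Remark~\ref{rem:post_thm_AD_is_phi32} as a non-rigorous interpretation.
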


\begin{remark}
	It is worth noting that the value of $\Delta$ is in agreement
        with what was expected in the framework of the Frisch-Parisi
        multifractal model~\cite{parisi1985singularity}, that is
        \[
        \Delta=3\zeta'_3+d-1
        \]
        Heuristically, the multifractal formalism relates the
        Hausdorff dimension $d(h)$ of the sets of points of given
        H\"older exponent $h$ with $\zeta_p$ through a Legendre
        transform:
	\[
	\zeta_p = \min_h(ph-d(h))+d
        ,\qquad
        d-d(h)=\max_p(\zeta_p-ph).
	\]
        If a point $x$ has H\"older exponent $h(x)$, then it is
        expected that the measure of energy dissipation has in $x$ a
        singularity exponent $\nu(x)=3h(x)-1+d$ (this can be deduced
        by the formula for $F_j$, with $\alpha=1+d/2$). Let
        $\nu:=3h-1+d$, then
        \[
        d-d(h)\geq\zeta_3-3h=d-\nu
        \]
        for all $h$, with equality only for $h=\zeta_3'$.  Then summing up
        all energy dissipation at points $x$ with $h(x)=h$ we get
        \[
        \mathcal E(h)\leq\limsup_n2^{d(h)n}2^{-\nu n},
        \]
        hence the only contribution is for $h=\zeta'_3$ and so
        $\Delta=d(\zeta'_3)$ as claimed.
\end{remark}

\appendix
\section{Appendix}\label{app:physical}
In this section we propose an heuristic argument to justify
formula~\eqref{e:defzetap} given in Section~\ref{s:physical_space} for
the exponent of the structure function.

Let $(\psi_j)_{j\in J}$ be a family of wavelets such that $\psi_j$ is
essentially supported on the cube $Q_j$ and they are all rescaled and
translated versions one of the other:
\[
\psi_j(x)=2^{d|j|/2}\psi_\emptyset(2^{|j|}x+\theta_j),
\]
for some ``mother wavelet'' $\psi_\emptyset$.
We consider real values $(u_j)_{j\in J}$ and pose $u(x):=\sum_{j\in
J}u_j\psi_j(x)$, for all $x\in Q_\emptyset$, and
define as usual the structure function
\[
S_p(r):=\int_{Q_\emptyset}\left\langle|u(x)-u(y)|^p\right\rangle_ydx,
\]
where $\langle\cdot\rangle_y$ denotes the average on the points $y$
such that $|y-x|=r$, and its exponents,
\[
\zeta_p:=-\lim_{n\to\infty}\frac1n\log_2S_p(2^{-n}).
\]
We introduce also the function
\[
\xi_p
:=d-\frac p2d-\limsup_{n\to\infty}\frac1n\log_2\sum_{|j|=n}|u_j|^p
,\qquad p\geq0.
\]
We want to show that under suitable hypothesis, if $\xi_p>0$, then
$\zeta_p=\min(p,\xi_p)$.

\begin{remark}\label{rem:a5}
For any map $\phi:J\to\R$, for almost every $x\in Q_\emptyset$,
\[
	\sum_{j\in J}\phi(j)\psi_j(x)
	=\sum_{i\geq0}\phi(x_i)\psi_{x_i}(x).
\]
\end{remark}

\begin{lem}\label{lem:6.1}
Let $(a_i)_{i\geq0}$ be a sequence of positive numbers. Let
$\lambda>1$ and $p\geq1$, then 
\[
\biggl(\sum_{k\geq0}a_k\biggr)^p
\leq c(\lambda,p)\sum_{k\geq0} \lambda^ka_k^p,
\]
where $c(\lambda,p)=1$ for $p=1$ and
$c(\lambda,p)=\bigl(1-\lambda^{-1/(p-1)}\bigr)^{-(p-1)}$ otherwise.
\end{lem}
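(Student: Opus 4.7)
The statement is a weighted $\ell^p$-type inequality, and the natural tool is Hölder's inequality. My plan is to split into the trivial case $p=1$ and the main case $p>1$.

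For $p=1$ the inequality is immediate since $\lambda^k\geq 1$ for all $k\geq 0$ and $\lambda>1$, so
\[
\sum_{k\geq 0}a_k\leq\sum_{k\geq 0}\lambda^k a_k,
\]
which gives $c(\lambda,1)=1$.

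For $p>1$, let $q=p/(p-1)$ be the conjugate exponent. The trick is to split each term $a_k$ as
\[
a_k=\bigl(\lambda^{k/p}a_k\bigr)\cdot\lambda^{-k/p},
\]
and apply Hölder's inequality with exponents $p$ and $q$:
\[
\sum_{k\geq 0}a_k\leq\biggl(\sum_{k\geq 0}\lambda^k a_k^p\biggr)^{\!1/p}\biggl(\sum_{k\geq 0}\lambda^{-kq/p}\biggr)^{\!1/q}.
\]
Since $q/p=1/(p-1)$, the second factor is a geometric series with ratio $\lambda^{-1/(p-1)}\in(0,1)$ (this is where $\lambda>1$ is essential), summing to $\bigl(1-\lambda^{-1/(p-1)}\bigr)^{-1}$. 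Raising the whole inequality to the $p$-th power and noting that $p/q=p-1$ yields exactly the stated constant $c(\lambda,p)=\bigl(1-\lambda^{-1/(p-1)}\bigr)^{-(p-1)}$.

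There is essentially no obstacle: the only thing to double-check is bookkeeping of the exponents (that $q/p=1/(p-1)$ and $p/q=p-1$) and that the geometric series converges, which is ensured by $\lambda>1$. The argument works verbatim if some $a_k$ are zero, and extends by monotone convergence to the case where the sum on the right is infinite (in which case the bound is trivial).
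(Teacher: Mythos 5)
Your proof is correct and is essentially the paper's argument: the paper applies H\"older's inequality to $\sum_k a_k=\int\lambda^{-k}a_k\,d\mu(k)$ with the weighted measure $\mu(k)=\lambda^k$, which is the same computation as your splitting $a_k=(\lambda^{k/p}a_k)\cdot\lambda^{-k/p}$ under counting measure. The exponent bookkeeping and the geometric-series evaluation match the stated constant exactly.
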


\begin{proof}
Simply apply H\"older inequality to
$\sum_{k\geq0}a_k=\int\lambda^{-k}a_kd\mu(k)$ where $\mu$ is the
discrete measure on the non-negative integers defined by
$\mu(k):=\lambda^k$.
\end{proof}

\begin{lem}
If $\xi_p>0$, then
%$\int_{Q_\emptyset}(\sum_{j\in J}w_j|\psi_j(x)|)^pdx$ is finite. In particular, 
$u\in L^p(Q_\emptyset)$.
\end{lem}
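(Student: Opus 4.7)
The plan is to combine the identification from Remark~\ref{rem:a5} with the weighted power-mean inequality from Lemma~\ref{lem:6.1}, then estimate the resulting sum using the definition of $\xi_p$. Since for almost every $x \in Q_\emptyset$ we have
\[
u(x) = \sum_{i\geq 0} u_{x_i}\psi_{x_i}(x),
\]
applying Lemma~\ref{lem:6.1} with some $\lambda>1$ (to be chosen) yields
\[
|u(x)|^p \leq c(\lambda,p)\sum_{k\geq 0}\lambda^k |u_{x_k}|^p |\psi_{x_k}(x)|^p
\]
for a.e.~$x$.

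Next I would integrate over $Q_\emptyset$. The key observation is that since $x_k$ is, by definition, the unique index of generation $k$ with $x \in Q_{x_k}$, we have
\[
|u_{x_k}|^p |\psi_{x_k}(x)|^p = \sum_{|j|=k}\mathbbm{1}_{Q_j}(x)\,|u_j|^p |\psi_j(x)|^p
\]
for a.e.~$x$. Then by the rescaling property of the wavelets, a change of variables gives
\[
\int_{Q_j}|\psi_j(x)|^p dx = 2^{dk(p/2 - 1)}\|\psi_\emptyset\|_p^p,
\]
so writing $S_k \coloneqq \sum_{|j|=k}|u_j|^p$ and combining everything,
\[
\int_{Q_\emptyset}|u(x)|^p dx \leq c(\lambda,p)\|\psi_\emptyset\|_p^p\sum_{k\geq 0}\bigl(\lambda \cdot 2^{d(p/2-1)}\bigr)^k S_k.
\]

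The last step is to use the definition of $\xi_p$. For every $\epsilon>0$ there exists $k_0$ such that $S_k \leq 2^{(d - pd/2 - \xi_p + \epsilon)k}$ for all $k \geq k_0$, so the general term is bounded by $(\lambda \cdot 2^{-\xi_p + \epsilon})^k$ for large $k$. Since $\xi_p > 0$, we can pick $\epsilon > 0$ so small and $\lambda > 1$ so close to $1$ that $\lambda \cdot 2^{-\xi_p + \epsilon} < 1$, making the geometric series converge and giving $u \in L^p(Q_\emptyset)$.

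I do not anticipate a serious obstacle: the main bookkeeping is matching the powers of $2$ so that the $k$-dependence coming from the $L^p$ normalization of $\psi_j$ exactly matches the exponent $d - pd/2$ appearing in $\xi_p$, which it does by design. The mild technical point is ensuring that the identification $x \mapsto (x_i)_i$ is valid pointwise up to a null set before integrating, but this is exactly what Remark~\ref{rem:a5} provides.
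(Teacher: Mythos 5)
Your proof is correct and follows essentially the same route as the paper's: Remark~\ref{rem:a5} to reduce to the sum along the branch through $x$, Lemma~\ref{lem:6.1} with a weight $\lambda^k$, the $L^p$ rescaling of the wavelets, and finally the definition of $\xi_p$ to make the series geometric. If anything, your explicit handling of the $\epsilon$ in the $\limsup$ and the final choice of $\lambda$ close to $1$ is slightly more careful than the paper's last displayed equality, which tacitly absorbs that step.
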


\begin{proof}
By Remark~\ref{rem:a5} and Lemma~\ref{lem:6.1}, for all $\lambda>1$,
\begin{equation*}
	\begin{split}
		\|u\|_{L^p}^p
		%=\int_{Q_\emptyset}|u(x)|^pdx
		%=\int_{Q_\emptyset}|\sum_{j\in J}u_j\psi_j(x)|^pdx
		%=\int_{Q_\emptyset}|\sum_{i\geq0}u_{x_i}\psi_{x_i}(x)|^pdx
		&\leq\int_{Q_\emptyset}\Bigl(\sum_{i\geq0}|u_{x_i}\psi_{x_i}(x)|\Bigr)^pdx
		\leq c(\lambda,p)\int_{Q_\emptyset}\sum_{i\geq0}\lambda^i|u_{x_i}\psi_{x_i}(x)|^pdx\\
		&=c(\lambda,p)\int_{Q_\emptyset}\sum_{j\in J}\lambda^{|j|}|u_j\psi_j(x)|^pdx
		\leq c_1(\lambda,p)\sum_{j\in J}\lambda^{|j|}|u_j|^p2^{(dp/2-d)|j|}\\
		&=c_1(\lambda,p)\sum_{i\geq0}\lambda^i2^{-\xi_pi}.\qedhere
	\end{split}
\end{equation*}
\end{proof}

We need to introduce an hypothesis on the function $u$, in that it
needs to show some sort of autosimilarity with respect to the wavelet
decomposition, as clarified below.

To do so, we need to introduce the sets of automorphisms on $J$, that is 
\[
	S\coloneqq \{\sigma:J\to
	J\,|\,\sigma(\emptyset)=\emptyset,\sigma(k)=\sigma(\bar\jmath)\text{ iff
	}k=\bar\jmath\}.
\]

\paragraph{Autosimilarity hypothesis.} 

For all $j\in J$ there exists $\sigma_j\in S$ such that for all $k\in
J$,
\[
u_{jk}\sim u_ju_{\sigma_j(k)}.
\]
Here with $\sim$ we intend that the absolute value of the ratio
between the two terms is uniformly bounded from above and below, away
from zero.

(Notice that the unique constant solution of an RCM trivially
satisfies this hypothesis.)

\begin{lem}\label{lem:6.3}
For all $n\geq0$, under autosimilarity hypothesis,
\[
\int_{Q_\emptyset}\Bigl|\sum_{|j|\geq n}u_j\psi_j(x)\Bigr|^pdx
\sim\|u\|_{L^p}^p2^{(\frac d2p-d)n}\sum_{|j|=n}|u_j|^p.
\]
\end{lem}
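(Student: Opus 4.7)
The plan is to localize the integral on cubes at generation $n$, rescale each back to $Q_\emptyset$, and reduce the resulting inner integral to $\|u\|_{L^p}^p$ via the autosimilarity hypothesis and a tree-automorphism argument.

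By Remark~\ref{rem:a5}, for a.e. $x\in Q_\emptyset$ the series $\sum_{|j|\geq n}u_j\psi_j(x)$ collapses to $\sum_{i\geq n}u_{x_i}\psi_{x_i}(x)$, so only wavelets indexed by ancestors of $x$ contribute. Since each $x$ lies in a unique cube $Q_{j_0}$ with $|j_0|=n$, the integral decomposes as
\[
\int_{Q_\emptyset}\Bigl|\sum_{|j|\geq n}u_j\psi_j(x)\Bigr|^p dx
=\sum_{|j_0|=n}\int_{Q_{j_0}}\Bigl|\sum_{k\in J}u_{j_0 k}\psi_{j_0 k}(x)\Bigr|^p dx.
\]
On each cube I would change variables via $y=2^n x+\theta_{j_0}$ (Jacobian $2^{-dn}$); the nested-cube identity $\theta_{j_0 k}=2^{|k|}\theta_{j_0}+\theta_k$ coming from the construction of Section~\ref{s:physical_space} gives $\psi_{j_0 k}(x)=2^{dn/2}\psi_k(y)$, contributing a further $2^{dnp/2}$ after taking the $p$-th power. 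Combining these factors produces the clean prefactor $2^{(dp/2-d)n}$ in front of $\int_{Q_\emptyset}\bigl|\sum_{k\in J}u_{j_0 k}\psi_k(y)\bigr|^p dy$.

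To finish it suffices to show this inner integral is $\sim|u_{j_0}|^p\|u\|_{L^p}^p$. The autosimilarity hypothesis gives $|u_{j_0 k}|\sim|u_{j_0}||u_{\sigma_{j_0}(k)}|$ with constants uniform in $k$; combined with Remark~\ref{rem:a5}, which reduces each pointwise series to one term per scale, this lets me extract $|u_{j_0}|^p$ and leaves $\int_{Q_\emptyset}\bigl|\sum_{k\in J}u_{\sigma_{j_0}(k)}\psi_k(y)\bigr|^p dy$. Interpreting $\sigma_{j_0}\in S$ as a tree automorphism, I would associate to it the dyadic, measure-preserving bijection $\tau_{j_0}:Q_\emptyset\to Q_\emptyset$ obtained by recursively permuting sub-cubes according to $\sigma_{j_0}$; since this acts by piecewise translation at every scale, the translation-equivariant construction of the wavelets yields $\psi_{\sigma_{j_0}(y_i)}(\tau_{j_0}(y))=\psi_{y_i}(y)$ for all $i$ and a.e. $y$. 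A second application of Remark~\ref{rem:a5} at $\tau_{j_0}(y)$ then identifies the inner integral with $\int|u(\tau_{j_0}(y))|^p dy=\|u\|_{L^p}^p$ by measure preservation, and summing over $|j_0|=n$ gives the claim.

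The main obstacle lies in passing from the pointwise autosimilarity, which controls only absolute values, to the $L^p$ comparison: a priori the signed series $\sum_k u_{j_0 k}\psi_k(y)$ and $u_{j_0}\sum_k u_{\sigma_{j_0}(k)}\psi_k(y)$ could cancel differently. The heuristic resolution, consistent with the appendix being non-rigorous, is that the collapse in Remark~\ref{rem:a5} leaves a single surviving wavelet per scale at a.e. $y$, so that bounds on absolute values propagate to the series; a fully rigorous version would require either a sign-preserving strengthening of the autosimilarity hypothesis or structural constraints on $\psi_\emptyset$ that make cross-scale cancellations tractable.
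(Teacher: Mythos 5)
Your proof follows essentially the same route as the paper's: decompose over the generation-$n$ cubes, rescale via $z=2^{n}x+\theta_{j_0}$, invoke the autosimilarity hypothesis together with the measure-preserving map on $Q_\emptyset$ induced by the tree automorphism $\sigma_{j_0}$, and identify the inner integral with $\|u\|_{L^p}^p$. The sign-cancellation caveat you raise is genuine, but it is equally present in the paper's own argument, which is explicitly heuristic at this point, so your proposal matches the intended proof.
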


\begin{proof}
Any automorphism $\sigma\in S$ induces a measure-preserving map
$\sigma$ on $Q_\emptyset$, defined by $Q_\emptyset\ni
x=(x_0,x_1,x_2,\dots)\mapsto\sigma(x):=(\sigma(x_0),\sigma(x_1),\dots)$,
so that $\psi_{\sigma^{-1}(k)}(z)=\psi_k(\sigma(z))$.

Then, for any $j\in J$ with $|j|=n$, by the two hypothesis,
\begin{equation*}
\begin{split}
	\sum_{k\geq j}u_k\psi_k(x)
	&=\sum_{k\in J}u_{jk}\psi_{jk}(x)\\
	&\sim\sum_{k\in J}u_ju_{\sigma_j(k)}2^{\frac d2|j|}\psi_k(2^{|j|}x+\theta_j)
	%=\sum_{k\in J}u_ju_k2^{\frac d2|j|}\psi_{\sigma_j^{-1}(k)}(2^{|j|}x+\theta_j)
	\sim2^{\frac d2|j|}u_ju(\sigma_j(z)),
\end{split}
\end{equation*}
where $z=z(x,j)=2^{|j|}x+\theta_j$ spans $Q_\emptyset$ as $x$ spans $Q_j$. Thus
\begin{equation*}
	\begin{split}
		\int_{Q_\emptyset}\Bigl|\sum_{|j|\geq n}u_j\psi_j(x)\Bigr|^pdx
		%\sim\int_{Q_\emptyset}\Bigl|\sum_{|j|=n}\sum_{k\in J}u_{jk}\psi_{jk}(x)\Bigr|^pdx
		&\sim2^{\frac d2np}\sum_{|j|=n}|u_j|^p\int_{Q_j}|u(\sigma_j(z))|^pdx\\
		&\sim\|u\|_{L^p}^p2^{(\frac d2p-d)n}\sum_{|j|=n}|u_j|^p.\qedhere
	\end{split}
\end{equation*}
\end{proof}

We decompose the difference appearing in $S_p$ as follows: 
\[
u(x)-u(y)
=\sum_{|j|<n}u_j(\psi_j(x)-\psi_j(y))
+\sum_{|j|\geq n}u_j\psi_j(x)-\sum_{|j|\geq n}u_j\psi_j(y).
\]
For the first terms, when $|j|<n$, 
\[
	|\psi_j(x)-\psi_j(y)|
	\approx|\nabla\psi_j||x-y|\mathbbm1_{Q_j}(x)
	\approx2^{(\frac d2+1)|j|}2^{-n}\mathbbm1_{Q_j}(x),
\]
and in particular
\[
	\int_{Q_\emptyset}\Bigl\langle\Bigl|\sum_{|j|=i}u_j(\psi_j(x)-\psi_j(y))\Bigr|^p\Bigr\rangle_ydx
	\approx2^{[(\frac d2+1)p-d]i-np}\sum_{|j|=i}|u_j|^p.
\]
Using Lemma~\ref{lem:6.3} to estimate the two remaining sums and putting
everything together, we get that for $\xi_p>0$,
\[
S_p(2^{-n})
\approx n^p2^{-np}\sum_{i=0}^n2^{[(\frac d2+1)p-d]i}\sum_{|j|=i}|u_j|^p
\approx 2^{-np}\sum_{i=0}^n2^{(p-\xi_p)i}
\approx 2^{-\min(p,\xi_p)n}
\]
hence we have the claimed result,
\[
-\lim_{n\to\infty}\frac1n\log_2S_p(2^{-n})=:
\zeta_p
=\min(p,\xi_p).
\]

\section*{Acknowledgements}
The authors were partially supported by Istituto Nazionale di Alta Matematica--Gruppo Nazionale Analisi Matematica, Probabilità e loro Applicazione, in the framework of the INdAM--GNAMPA Projects.

The authors would also like to thank the anonymous referees for their comments and corrections that substantially improved the paper.
%\nocite{*}

\bibliographystyle{abbrv}
\bibliography{bibdyadictesi}

\def\cprime{$'$}
\begin{thebibliography}{10}

\bibitem{BarEtAl2016}
L.~Andreis, D.~Barbato, F.~Collet, M.~Formentin, and L.~Provenzano.
\newblock Strong existence and uniqueness of the stationary distribution for a
  stochastic inviscid dyadic model.
\newblock {\em Nonlinearity}, 29(3):1156, 2016.

\bibitem{anselmet1984high}
F.~Anselmet, Y.~Gagne, E.~Hopfinger, and R.~Antonia.
\newblock High-order velocity structure functions in turbulent shear flows.
\newblock {\em Journal of Fluid Mechanics}, 140(1):63--89, 1984.

\bibitem{arneodo1998random}
A.~Arneodo, E.~Bacry, and J.~Muzy.
\newblock Random cascades on wavelet dyadic trees.
\newblock {\em Journal of Mathematical Physics}, 39(8):4142--4164, 1998.

\bibitem{barbiaflamor}
D.~Barbato, L.~A. Bianchi, F.~Flandoli, and F.~Morandin.
\newblock A dyadic model on a tree.
\newblock {\em Journal of Mathematical Physics}, 54:021507, 2013.

\bibitem{BarFlaMor2010CRAS}
D.~Barbato, F.~Flandoli, and F.~Morandin.
\newblock A theorem of uniqueness for an inviscid dyadic model.
\newblock {\em C. R. Math. Acad. Sci. Paris}, 348(9-10):525--528, 2010.

\bibitem{BarFlaMor11}
D.~Barbato, F.~Flandoli, and F.~Morandin.
\newblock Anomalous dissipation in a stochastic inviscid dyadic model.
\newblock {\em Ann. Appl. Probab.}, 21(6):2424--2446, 2011.

\bibitem{BarMor12}
D.~Barbato and F.~Morandin.
\newblock Positive and non-positive solutions for an inviscid dyadic model:
  well-posedness and regularity.
\newblock {\em Nonlinear Differential Equations and Applications NoDEA},
  20(3):1105--1123, 2013.

\bibitem{BarMorRom}
D.~Barbato, F.~Morandin, and M.~Romito.
\newblock Smooth solutions for the dyadic model.
\newblock {\em Nonlinearity}, 24(11):3083, 2011.

\bibitem{barbato2015global}
D.~Barbato, F.~Morandin, and M.~Romito.
\newblock Global regularity for a slightly supercritical hyperdissipative
  {Navier}--{Stokes} system.
\newblock {\em Analysis \& PDE}, 7(8):2009--2027, 2015.

\bibitem{barjinman10}
J.~Barral, X.~Jin, and B.~t. Mandelbrot.
\newblock Convergence of complex multiplicative cascades.
\newblock {\em Ann. Appl. Probab.}, 20(4):1219--1252, 2010.

\bibitem{belin1996exponents}
F.~Belin, P.~Tabeling, and H.~Willaime.
\newblock Exponents of the structure functions in a low temperature helium
  experiment.
\newblock {\em Physica D: Nonlinear Phenomena}, 93(1):52--63, 1996.

\bibitem{benzi1996generalized}
R.~Benzi, L.~Biferale, S.~Ciliberto, M.~Struglia, and R.~Tripiccione.
\newblock Generalized scaling in fully developed turbulence.
\newblock {\em Physica D: Nonlinear Phenomena}, 96(1):162--181, 1996.

\bibitem{benzi1993random}
R.~Benzi, L.~Biferale, A.~Crisanti, G.~Paladin, M.~Vergassola, and A.~Vulpiani.
\newblock A random process for the construction of multiaffine fields.
\newblock {\em Physica D: Nonlinear Phenomena}, 65(4):352--358, 1993.

\bibitem{benbifpar93}
R.~Benzi, L.~Biferale, and G.~Parisi.
\newblock On intermittency in a cascade model for turbulence.
\newblock {\em Physica D: Nonlinear Phenomena}, 65(1-2):163--171, 1993.

\bibitem{benzi19971+}
R.~Benzi, L.~Biferale, R.~Tripiccione, and E.~Trovatore.
\newblock (1+ 1)-dimensional turbulence.
\newblock {\em Physics of Fluids}, 9:2355, 1997.

\bibitem{benzi1984multifractal}
R.~Benzi, G.~Paladin, G.~Parisi, and A.~Vulpiani.
\newblock On the multifractal nature of fully developed turbulence and chaotic
  systems.
\newblock {\em Journal of Physics A: Mathematical and General}, 17(18):3521,
  1984.

\bibitem{bianchi2013uniqueness}
L.~A. Bianchi.
\newblock Uniqueness for an inviscid stochastic dyadic model on a tree.
\newblock {\em Electronic Communications in Probability}, 18:1--12, 2013.

\bibitem{Bif}
L.~Biferale.
\newblock Shell models of energy cascade in turbulence.
\newblock {\em Annu. Rev. Fluid Mech.}, 35:441--468, 2003.

\bibitem{buckmaster2015onsager}
T.~Buckmaster.
\newblock Onsager's conjecture almost everywhere in time.
\newblock {\em Communications in Mathematical Physics}, 333(3):1175--1198,
  2015.

\bibitem{buckmaster2015dissipative}
T.~Buckmaster, C.~De~Lellis, and L.~Sz{\'e}kelyhidi.
\newblock Dissipative {Euler} flows with {Onsager}-critical spatial regularity.
\newblock {\em Communications on Pure and Applied Mathematics}, 2015.

\bibitem{MR2522972}
A.~Cheskidov and S.~Friedlander.
\newblock The vanishing viscosity limit for a dyadic model.
\newblock {\em Phys. D}, 238(8):783--787, 2009.

\bibitem{CheFriPav2007}
A.~Cheskidov, S.~Friedlander, and N.~Pavlovi{\'c}.
\newblock Inviscid dyadic model of turbulence: the fixed point and {O}nsager's
  conjecture.
\newblock {\em J. Math. Phys.}, 48(6):065503, 16, 2007.

\bibitem{CheFriPav2010}
A.~Cheskidov, S.~Friedlander, and N.~Pavlovi{\'c}.
\newblock An inviscid dyadic model of turbulence: the global attractor.
\newblock {\em Discrete Contin. Dyn. Syst.}, 26(3):781--794, 2010.

\bibitem{constantin1994onsager}
P.~Constantin, W.~E, and E.~S. Titi.
\newblock Onsager's conjecture on the energy conservation for solutions of
  {Euler}'s equation.
\newblock {\em Communications in Mathematical Physics}, 165(1):207--209, 1994.

\bibitem{deriaz2006divergence}
E.~Deriaz and V.~Perrier.
\newblock Divergence-free and curl-free wavelets in two dimensions and three
  dimensions: application to turbulent flows.
\newblock {\em Journal of Turbulence}, 7(3):1--37, 2006.

\bibitem{DesNov1974}
V.~N. {Desnianskii} and E.~A. {Novikov}.
\newblock Simulation of cascade processes in turbulent flows.
\newblock {\em Prikladnaia Matematika i Mekhanika}, 38:507--513, 1974.

\bibitem{ducrob00}
J.~Duchon and R.~Robert.
\newblock Inertial energy dissipation for weak solutions of incompressible
  {E}uler and {N}avier-{S}tokes equations.
\newblock {\em Nonlinearity}, 13(1):249--255, 2000.

\bibitem{eggleston1949fractional}
H.~Eggleston.
\newblock The fractional dimension of a set defined by decimal properties.
\newblock {\em The Quarterly Journal of Mathematics}, 20:31--36, 1949.

\bibitem{eyink1994energy}
G.~L. Eyink.
\newblock Energy dissipation without viscosity in ideal hydrodynamics i.
  fourier analysis and local energy transfer.
\newblock {\em Physica D: Nonlinear Phenomena}, 78(3):222--240, 1994.

\bibitem{eyink1995besov}
G.~L. Eyink.
\newblock Besov spaces and the multifractal hypothesis.
\newblock {\em Journal of Statistical Physics}, 78(1-2):353--375, 1995.

\bibitem{FriPav}
S.~Friedlander and N.~Pavlovi{\'c}.
\newblock Blowup in a three-dimensional vector model for the {Euler} equations.
\newblock {\em Comm. Pure Appl. Math.}, 57(6):705--725, 2004.

\bibitem{MR1428905}
U.~Frisch.
\newblock {\em Turbulence}.
\newblock Cambridge University Press, Cambridge, 1995.
\newblock The legacy of A. N. Kolmogorov.

\bibitem{frisch1978simple}
U.~Frisch, P.-L. Sulem, and M.~Nelkin.
\newblock A simple dynamical model of intermittent fully developed turbulence.
\newblock {\em Journal of Fluid Mechanics}, 87(04):719--736, 1978.

\bibitem{gledzer1973system}
E.~Gledzer.
\newblock System of hydrodynamic type admitting two quadratic integrals of
  motion.
\newblock In {\em Soviet Physics Doklady}, volume~18, page 216, 1973.

\bibitem{Isett2017}
P.~Isett.
\newblock {\em {H\"older} Continuous {Euler} Flows in Three Dimensions with
  Compact Support in Time}, volume 196 of {\em Annals of Mathematics Studies}.
\newblock Princeton University Press, Princeton, NJ, 2017.

\bibitem{jaffard1997multifractal}
S.~Jaffard.
\newblock Multifractal formalism for functions part i: results valid for all
  functions.
\newblock {\em SIAM Journal on Mathematical Analysis}, 28(4):944--970, 1997.

\bibitem{MR2095627}
N.~H. Katz and N.~Pavlovi{\'c}.
\newblock Finite time blow-up for a dyadic model of the {Euler} equations
  equations.
\newblock {\em Trans. Amer. Math. Soc.}, 357(2):695--708 (electronic), 2005.

\bibitem{K41}
A.~N. Kolmogorov.
\newblock The local structure of turbulence in incompressible viscous fluids at
  very large {R}eynolds numbers.
\newblock {\em Dokl. Akad. Nauk. SSSR}, 30:301--305, 1941.

\bibitem{K62}
A.~N. Kolmogorov.
\newblock A refinement of previous hypotheses concerning the local structure of
  turbulence in a viscous incompressible fluid at high {Reynolds} number.
\newblock {\em J. Fluid Mech.}, 13:82--85, 1962.

\bibitem{PhysRevE.59.5457}
G.~S. Lewis and H.~L. Swinney.
\newblock Velocity structure functions, scaling, and transitions in
  high-reynolds-number couette-taylor flow.
\newblock {\em Phys. Rev. E}, 59:5457--5467, May 1999.

\bibitem{mandelbrot1974intermittent}
B.~B. Mandelbrot.
\newblock Intermittent turbulence in self-similar cascades: divergence of high
  moments and dimension of the carrier.
\newblock {\em Journal of fluid Mechanics}, 62(2):331--358, 1974.

\bibitem{meneveau1991multifractal}
C.~Meneveau and K.~Sreenivasan.
\newblock The multifractal nature of turbulent energy dissipation.
\newblock {\em Journal of Fluid Mechanics}, 224:429--484, 1991.

\bibitem{mey92}
Y.~Meyer.
\newblock {\em Wavelets and operators}, volume~37 of {\em Cambridge Studies in
  Advanced Mathematics}.
\newblock Cambridge University Press, Cambridge, 1992.
\newblock Translated from the 1990 French original by D. H. Salinger.

\bibitem{muzy1993multifractal}
J.-F. Muzy, E.~Bacry, and A.~Arneodo.
\newblock Multifractal formalism for fractal signals: The structure-function
  approach versus the wavelet-transform modulus-maxima method.
\newblock {\em Physical review E}, 47(2):875, 1993.

\bibitem{obukhov1962some}
A.~Obukhov.
\newblock Some specific features of atmospheric turbulence.
\newblock {\em Journal of Geophysical Research}, 67(8):3011--3014, 1962.

\bibitem{ohkitani1989temporal}
K.~Ohkitani and M.~Yamada.
\newblock Temporal intermittency in the energy cascade process and local
  lyapunov analysis in fully-developed model turbulence.
\newblock {\em Progress of theoretical physics}, 81(2):329--341, 1989.

\bibitem{olsen2004hausdorff}
L.~Olsen.
\newblock On the {H}ausdorff dimension of generalized {B}esicovitch-{E}ggleston
  sets of $d$-tuples of numbers.
\newblock {\em Indagationes Mathematicae}, 15(4):535--547, 2004.

\bibitem{onsager1949statistical}
L.~Onsager.
\newblock Statistical hydrodynamics.
\newblock {\em Il Nuovo Cimento (1943-1954)}, 6:279--287, 1949.

\bibitem{parisi1985singularity}
G.~Parisi and U.~Frisch.
\newblock On the singularity structure of fully developed turbulence.
\newblock In {\em Turbulence and Predictability in Geophysical Fluid Dynamics.
  Proc. Intl. School of Physics E. Fermi}, pages 84--87. Amsterdam, The
  Netherlands, 1985.

\bibitem{perrier1996besov}
V.~Perrier and C.~Basdevant.
\newblock Besov norms in terms of the continuous wavelet transform. application
  to structure functions.
\newblock {\em Mathematical Models and Methods in Applied Sciences},
  6(05):649--664, 1996.

\bibitem{riedi1999multifractal}
R.~H. Riedi.
\newblock Multifractal processes.
\newblock Technical report, DTIC Document, 1999.

\bibitem{she1994universal}
Z.-S. She and E.~Leveque.
\newblock Universal scaling laws in fully developed turbulence.
\newblock {\em Physical review letters}, 72(3):336, 1994.

\bibitem{stevenson2011divergence}
R.~Stevenson.
\newblock Divergence-free wavelet bases on the hypercube.
\newblock {\em Applied and Computational Harmonic Analysis}, 30(1):1--19, 2011.

\bibitem{tao2015finite}
T.~Tao.
\newblock Finite time blowup for an averaged three-dimensional
  {Navier}-{Stokes} equation.
\newblock {\em Journal of the American Mathematical Society}, 2015.

\end{thebibliography}

\end{document}